\definecolor{darkgreen}{rgb}{0,0.5,0}
\crefname{theorem}{Theorem}{Theorems}
\Crefname{lemma}{Lemma}{Lemmas}
\Crefname{invariant}{Invariant}{Invariants}
\Crefname{claim}{Claim}{Claims}
\Crefname{observation}{Observation}{Observations}
\Crefname{@algorithm}{Algorithm}{Algorithms}
\Crefname{figure}{Figure}{Figures}
\newtheorem{theorem}{Theorem}[section]
\newtheorem{lemma}[theorem]{Lemma}
\newtheorem*{remark*}{Remark}
\newtheorem{result}{Result}
\newcommand{\eps}{\varepsilon}
\newcommand{\E}[1]{\mathbb{E}\left[#1\right]}
\newcommand{\tO}{\tilde{O}}
\newcommand{\cA}{\mathcal{A}}
\newcommand{\eqd}{\stackrel{\mathclap{\tiny\mbox{def}}}{=}}
\newcommand{\rb}[1]{\left( #1 \right)}
\newcommand{\cb}[1]{\left\{ #1 \right\}}
\newcommand{\Sstar}{S^*}
\newcommand{\Tstar}{T^*}
\newcommand{\cAMPC}{\mathcal{A}_{\textsc{MPC}}}
\newcommand{\Astream}{\mathcal{A}_{\textsc{stream}}}
\DeclareMathOperator{\poly}{poly}
\title{New Parallel and Streaming Algorithms for \\ Directed Densest Subgraph}
\author{
  Slobodan Mitrovi\'c\\
  University of California, Davis\\
  \texttt{smitrovic@ucdavis.edu}
  \And
  Theodore Pan\\
  University of California, Davis\\
  \texttt{thjpan@ucdavis.edu}
  \And
  Mahdi Qaempanah\\
  Sharif University of Technology\\
  \texttt{mahdi.ghaempanah111@student.sharif.edu}
  \And
  Mohammad Amin Raeisi\\
  Yale University\\
  \texttt{amin.raeisi@yale.edu}
}
\begin{document}

\maketitle

\begin{abstract}
  Finding dense subgraphs is a fundamental problem with applications to community detection, clustering, and data mining. 
  Our work focuses on finding approximate densest subgraphs in \emph{directed graphs} in computational models for processing massive data. We consider two such models: Massively Parallel Computation (MPC) and semi-streaming.
  We show how to find a $(2+\eps)$-approximation in $\tilde{O}(\sqrt{\log n})$ MPC rounds with sublinear memory per machine.
  This improves the state-of-the-art results by Bahmani et al.~\cite[WAW 2014]{bahmani2014efficient} and Mitrovi\'{c} \& Pan \cite[ICML 2024]{pmlr-v235-mitrovic24a}.
  Moreover, we show how to find an $O(\log n)$-approximation in a \emph{single pass} in semi-streaming.
  This is in stark contrast to prior work, which implies $\tilde{\Omega}(n^{1/6})$ approximation for a single pass; a better approximation is known only for randomized streams (Mitrovi\'c \& Pan).
  This is the first \textit{deterministic single-pass} semi-streaming algorithm for the densest subgraph problem, both for undirected and directed graphs. 
  Our semi-streaming approach is also an insertion-only dynamic algorithm, attaining the first directed densest subgraph algorithm with $O(\log^2 n)$ worst-case update time while using sub-linear memory.
  We empirically evaluate our approaches in two ways. First, we illustrate that our single-pass semi-streaming algorithm performs much better than the theoretical guarantee. Specifically, its approximation on temporal datasets matches the $(2+\varepsilon)$-approximation of an $O(\log n)$-pass algorithm by Bahmani et al.~\cite[VLDB 2012]{bahmani2012densest}. 
  Second, we demonstrate that our MPC algorithm requires fewer rounds than prior work.
\end{abstract}

\section{Introduction}
Computing dense subgraphs in directed graphs is a classical optimization task where we are interested in subgraphs with a large edge-to-vertex ratio.
In particular, given a directed graph $G = (V, E)$, the densest subgraph (DS) problem asks to find two vertex subsets $S, T \subseteq V$ such that the ratio $|E(S, T)| / \sqrt{|S| \cdot |T|}$ is maximized, where $E(S, T)$ is the set of directed edges from $S$ to $T$.
This problem has found a wide range of applications in graph mining, including the analysis of social networks~\cite{lee2010survey,fortunato2010community,chen2010dense}, bioinformatics~\cite{fratkin2006motifcut,saha2010dense}, visualization~\cite{zhang2012extracting,zhao2012large}, and finance~\cite{zhang2017hidden,farago2019search,chen2022antibenford,ji2022detecting}.

With the constant increase in the size and prevalence of large datasets, it has become crucial to develop algorithms that solve fundamental optimization problems with limited constraints, such as memory per computing unit or data access.
While it is known how to find a DS in polynomial time~\cite{charikar2000greedy}, it is unclear how to efficiently implement this algorithm in the context of large-scale modern computation.
It inspired several research groups to study DS computation in distributed, parallel, and streaming settings under various memory constraints and approximation guarantees.

In Massively Parallel Computation (MPC), which is a theoretical abstraction of popular large-scale frameworks such as MapReduce and Hadoop, \cite{bahmani2014efficient} proposed an $O(\log n / \eps^2)$ MPC round algorithm for constructing $(1+\eps$)-approximate DS for directed and undirected graphs, where $\eps > 0$ is a precision parameter and $n = |V|$.
For undirected DS, this complexity was improved by \cite{ghaffari2019improved} to $O(\sqrt{\log n} \cdot \log \log n)$ rounds.
Recently, in the setting where each machine in MPC has $\tilde{\Theta}(n)$ memory, \cite{pmlr-v235-mitrovic24a} designed an algorithm for $(2+\eps)$-approximation of directed DS that takes $O(\sqrt{\log n})$ rounds. 

In the context of semi-streaming, \cite{bahmani2012densest} proposed an elegant peeling-based algorithm that computes a $(2+\eps)$-approximation of directed DS in $O(\log n / \eps)$ passes. 
If the goal is to compute undirected DS, \cite{esfandiari2015applications} provide a single-pass algorithm that guarantees a $(1+\eps)$-approximation.
Interestingly, the same technique for directed DS attains a $\tilde{\Omega}(n^{1/6})$-approximation. Better approximations are only known if the underlying stream is randomized~\cite{pmlr-v235-mitrovic24a} or if $\Omega(n^{1.5})$ memory can be used in the semi-streaming setting~\cite{esfandiari2015applications}.

Motivated by this disparity between the state-of-the-art results for directed and undirected DS, we ask: \emph{What algorithmic techniques help narrow the gap in computing directed versus undirected DS?}

\subsection{Our contributions}

\begin{table}
\centering
\begin{tabularx}{\textwidth}{|c|>{\centering\arraybackslash}X|c|c|>{\centering\arraybackslash}X|}
\hline
\multicolumn{5}{|c|}{Massively Parallel Computation model}\\
\hline
 & Approximation & Memory per machine & Round complexity & Reference\\
\hline
&&&&\\[-1em]
Undirected & $(1+\eps)$ & $O(n^\delta)$ & $\tilde{O}(\sqrt{\log n})$ & \cite{ghaffari2019improved}\\
\hline
&&&&\\[-1em]
\multirow{3}{*}{Directed} & $(1+\eps)$ & $O(n^\delta)$ & $O(\log n)$ & \cite{bahmani2014efficient}\\
\cline{2-5}
&&&&\\[-1em]
& $(2+\eps)$ & $O(n^\delta)$ & $\tilde{O}(\sqrt{\log n})$ & Our work\\
\cline{2-5}
&&&&\\[-1em]
& $(2 + \eps)$ & $\tilde{O}(n)$ & $O(\sqrt{\log n})$ & \cite{pmlr-v235-mitrovic24a}\\
\hline
\end{tabularx}

\begin{tabularx}{\textwidth}{|c|>{\centering\arraybackslash}X|>{\centering\arraybackslash}X|c|>{\centering\arraybackslash}X|}
\hline
\multicolumn{5}{|c|}{Semi-streaming model}\\
\hline
 & Approximation & \# of passes & Deterministic or randomized & Reference\\
\hline
&&&&\\[-1em]
\multirow{3}{*}{Undirected} & $(1+\eps)$ & $1$ & Randomized & \cite{esfandiari2015applications}\\
\cline{2-5}
&&&&\\[-1em]
 & $O(\log n)$ & 1 & Deterministic & Our work\\
\cline{2-5}
&&&&\\[-1em]
 & $(2+\eps)$ & $O(\log n)$ & Deterministic & \cite{bahmani2012densest}\\
\hline
&&&&\\[-1em]
\multirow{4}{*}{Directed} & $\tilde{\Omega}(n^{1/6})$ & $1$ & Randomized & \cite{esfandiari2015applications}\\
\cline{2-5}
&&&&\\[-1em]
& $(2 + \eps)$ & $1$ & Random order stream & \cite{pmlr-v235-mitrovic24a}\\
\cline{2-5}
&&&&\\[-1em]
& $O(\log n)$ & $1$ & Deterministic & Our work\\
\cline{2-5}
&&&&\\[-1em]
& $(2 + \eps)$ & $O(\log n)$ & Deterministic & \cite{bahmani2012densest}\\
\hline
\end{tabularx}

\begin{tabularx}{\textwidth}{|c|>{\centering\arraybackslash}X|c|c|c|}
\hline
\multicolumn{5}{|c|}{Dynamic model}\\
\hline
 & Approximation & Memory usage & Update time & Reference\\
\hline
&&&&\\[-1em]
\multirow{3}{*}{Undirected} & $(4+\eps)$ & $\tilde{O}(n)$ & $\tilde{O}(1)$ amortized & \cite{bhattacharya2015space}\\
\cline{2-5}
&&&&\\[-1em]
 & $O(\log n)$ & $\tilde{O}(n)$ & $\tilde{O}(1)$ worst-case & Our work (insertion-only)\\
\cline{2-5}
&&&&\\[-1em]
 & $(1+\eps)$ & $\tilde{O}(m)$ & $\tilde{O}(1)$ worst-case & \cite{sawlani2020near}\\
\hline
&&&&\\[-1em]
\multirow{2}{*}{Directed} & $O(\log n)$ & $\tilde{O}(n)$ & $\tilde{O}(1)$ worst-case & Our work (insertion-only)\\
\cline{2-5}
&&&&\\[-1em]
& $(1+\eps)$ & $\tilde{O}(m)$ & $\tilde{O}(1)$ worst-case & \cite{sawlani2020near}\\
\hline
\multicolumn{5}{c}{}\\
\end{tabularx}

\caption{A summary of state-of-the-art results on the DS problem in MPC, semi-streaming, and dynamic models for constants $\epsilon > 0$ and $\delta \in (0, 1)$, and graphs with $n$ vertices and $m$ edges.}
\label{table:previous}
\end{table}

\cref{table:previous} provides a summary of previous state-of-the-art results in comparison to ours.

\begin{result}[\cref{theorem:mpc} rephrased]\label{result1}
Given an $n$-vertex graph and $\eps > 0$, there exists a sublinear MPC algorithm that outputs a $(2+\eps)$-approximate directed DS in $\tilde{O}(\sqrt{\log n})$ rounds. The algorithm uses $O(n^\delta)$ memory per machine and $O(n^{1+\delta} + m)$ total memory for $\delta \in (0,1)$.
\end{result}

This improves on \cite{bahmani2014efficient}, which uses $O(\log n)$ rounds, and on \cite{pmlr-v235-mitrovic24a}, which uses $O(\sqrt{\log n})$ MPC rounds but requires near-linear memory per machine.
Additionally, \cref{result1} matches the state-of-the-art round complexity of $\tilde{O}(\sqrt{\log n})$ for undirected graphs from \cite{ghaffari2019improved}, bridging the gap between the directed and undirected DS problems in MPC.

\begin{result}[\cref{theorem:semi-streaming} rephrased]\label{result2}
Given an $n$-vertex graph and $\eps > 0$, there exists a single-pass deterministic semi-streaming algorithm that outputs an $O(\log n)$-approximate directed DS.
\end{result}

This is the first single-pass semi-streaming algorithm for the directed DS problem on arbitrary streams.
\cite{pmlr-v235-mitrovic24a} attains a $(2+\eps)$-approximation but only for randomized streams.
\cite{esfandiari2015applications} attains a $(1+\eps)$-approximation when additional memory, i.e., $O(n^{1.5} \poly \log n)$, is allowed.
If we were to extend the ideas of using uniform sampling from prior works, a generalization of the construction in \cite{pmlr-v235-mitrovic24a} shows that it will result in at least a $\tilde{\Omega}(n^{1/6})$-approximation.
We also note that this is a deterministic algorithm and can be easily adapted to undirected graphs, leading to the first deterministic single-pass semi-streaming algorithm for the DS problem in both undirected and directed cases.

\cref{result2} is also an insertion-only dynamic algorithm.
Its worst-case update time is $O(\log n)$ for undirected and $O(\log^2 n)$ for directed graphs.
To our knowledge, no previous algorithm maintains an approximate \emph{directed} DS in semi-streaming.
\cite{bhattacharya2015space} shows how to maintain a $(4+\eps)$-approximate undirected DS with $\tilde{O}(1)$ amortized update time and $\tilde{O}(n)$ memory but may have $\Omega(n)$ worst-case update time. \cite{sawlani2020near} shows how to maintain a $(1+\eps)$-approximate directed DS with $O(\log^5 n)$ worst-case update time but requires linear memory.

Empirical evaluation suggests that, in practice, our semi-streaming algorithm yields an approximation much better than $\log n$. 
On temporal datasets specifically, it matches the approximation of \cite{bahmani2012densest}.
\section{Preliminaries}
\label{sec:prelim}
\textbf{Transformation: General directed to bipartite undirected graph.}
Given a directed graph $G_{\text{dir}} = (V_{\text{dir}}, E_{\text{dir}})$, we represent it as a bipartite graph $G = (V_1, V_2, E)$ where: (i) $V_1$ and $V_2$ are two copies of $V_{\text{dir}}$, and (ii) there is an edge in $E$ between $u\in V_1$ and $v\in V_2$ iff there is a directed edge from $u$ to $v$ in $E_{\text{dir}}$.
All directed graphs will be treated as bipartite with this representation. 

\textbf{Notation.} For a bipartite graph $G = (S, T, E)$, we use $n$ to refer to $|S| + |T|$, the total number of vertices.
Given two vertex sets $A\subseteq S$ and $B\subseteq T$, we refer to the edges between them by $E_G(A, B) \eqd \{e = (i, j) \in E : i\in A, j\in B\}$.
We use $d_G(v)$ to denote the degree of vertex $v$ in $G$.

\textbf{Directed densest subgraph.} 
Given bipartite graph $G = (V, V, E)$ and vertex sets $S,T\subseteq V$, the \textit{density} $\rho(S, T)$ is defined as $\rho(S, T) \eqd |E_G(S,T)|/\sqrt{|S| \cdot |T|}$. A \textit{directed densest subgraph} is sets $S^*$, $T^*$ such that $(S^*, T^*) \in \arg \max_{S,T\subseteq V} \rho(S, T)$.

\textbf{Massively Parallel Computation (MPC).}
In MPC, synchronous rounds of computation are performed across $N$ machines.
Each machine has $S$ words of memory and, initially, the input data is arbitrarily distributed across the machines.
During a round, each machine computes its local data.
Then, after the round, machines exchange messages synchronously.
Each machine can send messages to any other machine, but each machine can send and receive at most $S$ words of data.
The primary objective is to perform computation in as few rounds as possible.
With respect to $S$, three regimes are primarily studied: given $\delta \in (0,1)$, \emph{sub-linear} ($S = n^\delta$), \emph{near-linear} ($S = n\poly \log n$), and \emph{super-linear} ($S = n^{1+\delta}$).
In this work, we focus on the most restrictive sub-linear memory regime.
Even though the running time in definition of the MPC model is allowed to be arbitrarily large, the running time of our MPC algorithm is near-linear per round.

\textbf{Semi-streaming.}
In the semi-streaming setting, an input graph $G = (V, E)$ is given as a stream of edges.
That is, an algorithm receives one edge $e \in E$ at a time, and updates its internal memory based on $e$. This internal memory is constrained to be $O(n \cdot \poly \log n)$.
After all edges are presented as a stream, we say the algorithm made a \emph{pass} over the graph. An algorithm can make multiple passes over data.
During a pass, the algorithm can perform arbitrarily large polynomial-time computations.
We remark that our algorithm spends $O(\poly \log n)$ time to update its memory per edge.


\section{The base algorithm}
In this section, we describe the base of our approach. 
Then, in \cref{sec:MPC,sec:semi-streaming}, we develop new insights enabling us to extend this approach to the MPC and semi-streaming model.

As a reminder, we use $\Sstar$ and $\Tstar$ to denote the vertex subsets corresponding to the densest subgraph.
Additionally, we view the input graph $G$ as a bipartite undirected rather than a directed graph; details of this transformation are described in \cref{sec:prelim}.
We assume that our algorithm is given the density $D = \rho(\Sstar, \Tstar)$ and the ratio $z = \sqrt{|\Sstar|/|\Tstar|}$.\footnote{Our final algorithms do not require knowledge of $D$ and $z$. Those details are discussed in \cref{sec:MPC}.}
Then, leveraging the knowledge of $D$ and $z$, the main idea of this base algorithm is to construct a pair of vertex subsets $(S, T) \subseteq V \times V$ such that it approximates $(\Sstar, \Tstar)$ in the following sense:
\begin{itemize}
    \item Each vertex in $S$ has at least $D/(2z)$ neighbors in $T$.
    \item Each vertex in $T$ has at least $Dz/2$  neighbors in $S$.
\end{itemize}
We show that such a pair $(S, T)$ must exist and that it is a $2$-approximation of the directed densest subgraph.
\begin{restatable}[Subgraph existence]{lemma}{lemmaexistence}
\label{lemma:existence}
Let $(S^*, T^*)$ be a directed densest subgraph. Let $D = \rho(S^*, T^*)$ be its density and $z = \sqrt{|S^*|/|T^*|}$.
There exists an induced subgraph $H$ on vertex sets $(S,T)$ for which it holds that $d_H(v) \geq D/(2z)$ for all $v\in S$ and $d_H(v) \geq Dz/2$ for all $v\in T$.    
\end{restatable}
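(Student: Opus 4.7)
The plan is a peeling argument starting from the densest subgraph itself. Initialize $(S, T) = (S^*, T^*)$ and iteratively remove any vertex $v \in S$ with $d_H(v) < D/(2z)$ or any vertex $v \in T$ with $d_H(v) < Dz/2$, where $H$ denotes the current induced subgraph on $(S, T)$. When this process halts, the surviving pair and its induced subgraph satisfy exactly the two degree conditions required by the lemma, provided both sides are nonempty. The entire proof therefore reduces to showing that peeling cannot remove every vertex of $S^* \cup T^*$.

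The first step is to rewrite the two thresholds so they align with simple edge counts. A direct calculation from $D = |E(S^*, T^*)|/\sqrt{|S^*|\,|T^*|}$ and $z = \sqrt{|S^*|/|T^*|}$ gives
\[
\frac{D}{2z} \;=\; \frac{|E(S^*, T^*)|}{2|S^*|}, \qquad \frac{Dz}{2} \;=\; \frac{|E(S^*, T^*)|}{2|T^*|},
\]
so each threshold is exactly half the average degree of the corresponding side of the densest subgraph. This is what makes the charging work out cleanly.

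The main step is the charging argument. Suppose for contradiction that peeling removes every vertex. List the removals as $v_1, v_2, \ldots$ and let $H_i$ be the induced subgraph just before $v_i$ is removed. Charge each edge of the original $E(S^*, T^*)$ to whichever endpoint is peeled first, so that the total charge equals $|E(S^*, T^*)|$. By the peeling rule, $v_i$ contributes strictly fewer than $|E(S^*, T^*)|/(2|S^*|)$ when $v_i$ came from $S^*$, and strictly fewer than $|E(S^*, T^*)|/(2|T^*|)$ when $v_i$ came from $T^*$. Summing over the two sides,
\[
|E(S^*, T^*)| \;<\; |S^*|\cdot\frac{|E(S^*, T^*)|}{2|S^*|} + |T^*|\cdot\frac{|E(S^*, T^*)|}{2|T^*|} \;=\; |E(S^*, T^*)|,
\]
a contradiction. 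The same argument also handles the case where one side empties before the other: once, say, $S$ is empty, every remaining vertex of $T$ has degree $0$ and is peeled immediately, so the only way for peeling to fail to leave a valid pair is the fully empty halt state, which is what we just ruled out.

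I do not expect a real obstacle. The crux is the clean rewriting of the two thresholds into a $|E(S^*, T^*)|/2 + |E(S^*, T^*)|/2$ budget; once that is in hand, the argument is the natural asymmetric, bipartite analogue of Charikar's classical $2$-approximation analysis for the undirected densest subgraph, with the asymmetry supplied exactly by the factor $z$.
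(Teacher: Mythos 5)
Your proof is correct but takes a genuinely different route from the paper's. The paper shows that the optimal pair $(S^*, T^*)$ itself already satisfies the two degree bounds: since it is optimal, deleting any single vertex cannot raise the density, and rearranging that inequality gives $d_H(v) \geq \bigl(1 - \sqrt{1 - 1/|S^*|}\bigr)\,|E_G(S^*, T^*)| \geq |E_G(S^*, T^*)|/(2|S^*|) = D/(2z)$ for each $v \in S^*$, and symmetrically for $T^*$ --- a local, per-vertex argument with no auxiliary peeling process. You instead run the peeling inside $(S^*, T^*)$ and use a Charikar-style global charge-counting argument to show both sides cannot simultaneously empty, after rewriting the thresholds as $|E_G(S^*, T^*)|/(2|S^*|)$ and $|E_G(S^*, T^*)|/(2|T^*|)$, which is the same key identity the paper implicitly relies on. Both are sound. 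The paper's version is a touch sharper because it pins down the witness exactly as $(S^*, T^*)$; yours only guarantees some nonempty sub-pair of it. For the role the lemma plays downstream --- exhibiting a subgraph whose vertices \cref{alg:base} can never peel, whence the algorithm's output is nonempty --- either conclusion suffices, and the two arguments are interchangeable. One small point worth making explicit in your write-up: you assume $|E_G(S^*, T^*)| > 0$ so that the thresholds are strictly positive (this is what makes the one-side-empty halt state impossible and the final contradiction nontrivial); if $D = 0$ the thresholds vanish, nothing is peeled, and the lemma holds trivially.
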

\begin{proof}
We claim that the directed densest subgraph satisfies the constraints of such a subgraph $H$. Consider any vertex $v\in S^*$. 
Since $H$ is the directed densest subgraph, removing $v$ does not increase the density. Therefore, we have
\begin{eqnarray*}
\frac{|E_G(S^*, T^*)|}{\sqrt{|S^*||T^*|}} &\geq& \frac{|E_G(S^*, T^*)| - d_H(v)}{\sqrt{(|S^*| - 1)|T^*|}}\\ \implies d_H(v) &\geq& \rb{1 - \sqrt{\frac{|S^*|-1}{|S^*|}}}|E_G(S^*,T^*)|\\
\implies d_H(v) &\geq& \frac{|E_G(S^*,T^*)|}{2|S^*|} = \frac{D}{2z}
\end{eqnarray*}
where for the last inequality, we used $\rb{1 - 1/|S^*|}^{1/2} \leq 1 - 1/(2|S^*|)$. Similarly, we can show that for any vertex $v\in T^*$, it must satisfy $d_H(v) \geq Dz/2$. 

Hence, the pair $(\Sstar, \Tstar)$ satisfies the constraints of the claim.
\end{proof}
\begin{restatable}[Sufficient condition for $2$-approximation]{lemma}{lemmaapprox}
\label{lemma:approximation}
Let $(S^*, T^*)$ be a directed densest subgraph. Let $D = \rho(S^*, T^*)$ be its density and $z = \sqrt{|S^*|/|T^*|}$.
Then, any induced subgraph $H$ on vertex sets $(S,T)$ which satisfies $d_H(v) \geq D/(2z)$ for all $v\in S$ and $d_H(v) \geq Dz/2$ for all $v \in T$ has density at least $D/2$. In other words, it is a $2$-approximation of the directed densest subgraph.
\end{restatable}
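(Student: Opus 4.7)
The plan is to lower bound $|E_H(S,T)|$ in two different ways by summing the degree constraints from each side, and then combine the two bounds multiplicatively to get a bound that scales like $\sqrt{|S|\cdot|T|}$, which is exactly what the denominator of the density requires.

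First I would observe that each edge of $H$ is incident to exactly one vertex of $S$ and exactly one vertex of $T$ (since $H$ is induced on a bipartite vertex partition), so $|E_H(S,T)|$ equals both $\sum_{v\in S} d_H(v)$ and $\sum_{v\in T} d_H(v)$. Plugging in the hypothesized per-vertex lower bounds gives $|E_H(S,T)| \geq |S| \cdot D/(2z)$ from the $S$-side, and $|E_H(S,T)| \geq |T| \cdot Dz/2$ from the $T$-side.

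Multiplying these two inequalities yields
\[
|E_H(S,T)|^2 \;\geq\; |S|\cdot|T| \cdot \frac{D^2}{4},
\]
and notice that the factors of $z$ cancel, which is the whole reason the bounds were chosen asymmetrically in $z$. Taking square roots gives $|E_H(S,T)| \geq \tfrac{D}{2}\sqrt{|S|\cdot|T|}$, so $\rho(S,T) = |E_H(S,T)|/\sqrt{|S|\cdot|T|} \geq D/2$, which is the claimed $2$-approximation.

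There is no real obstacle here; the lemma is essentially an algebraic identity once one notices that the $S$-side and $T$-side degree bounds are geometric-mean duals of each other with respect to $z$. The only thing worth emphasizing in the write-up is why the product of the two degree-based bounds is the right combination (AM-GM / Cauchy–Schwarz flavor), so that the $z$-dependence cancels exactly rather than merely approximately.
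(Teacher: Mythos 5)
Your proof is correct and uses the same key observation as the paper: the two degree hypotheses give $|E_H(S,T)| \geq |S|\cdot D/(2z)$ and $|E_H(S,T)| \geq |T|\cdot Dz/2$. The only cosmetic difference is in the final algebraic step — you multiply the two bounds and take a square root so the $z$'s cancel, whereas the paper takes the maximum of the two bounds and observes that the resulting factors $\tfrac{1}{z}\sqrt{|S|/|T|}$ and $z\sqrt{|T|/|S|}$ are reciprocals, so one of them is at least $1$; both routes yield $\rho(S,T)\geq D/2$.
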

\begin{proof}
The amount of edges in $H$ is lower bounded by $\max\left(|S|\cdot\frac{D^*}{2z}, |T|\cdot\frac{D^*z}{2}\right)$. Therefore, we have
\begin{eqnarray*}
\rho(H) &\geq& \frac{\max\left(|S|\cdot\frac{D^*}{2z}, |T|\cdot\frac{D^*z}{2}\right)}{\sqrt{|S||T|}}\\
&=& \frac{D^*}{2}\cdot\max\rb{\frac{1}{z}\sqrt{\frac{|S|}{|T|}}, z\sqrt{\frac{|T|}{|S|}}}\\
&\geq& \frac{D^*}{2}
\end{eqnarray*}
since the two parameters in $\max\rb{\frac{1}{z}\sqrt{\frac{|S|}{|T|}}, z\sqrt{\frac{|T|}{|S|}}}$ are reciprocals of each other, implying one of them is at least $1$.
\end{proof}
\cref{lemma:approximation} provides sufficient conditions under which a given subgraph is a $2$-approximate densest one.
These conditions inspire a simple peeling procedure for constructing such a subgraph, e.g., \cref{alg:base}.
Let $S$ and $T$ denote the two bipartite sides of that maintained subgraph.
Each $S$ and $T$ is initialized to $V$.
Then, it iteratively removes vertices that do \textbf{not} satisfy the degree conditions stated by \cref{lemma:approximation}.
More precisely, the algorithm iteratively removes vertices from $S$ with degrees less than $D/(2z)$ and vertices from $T$ with degrees less than $Dz/2$.
If \textbf{only} these two steps were performed without any extra stopping rule, this algorithm \emph{could execute too many peeling iterations before terminating}.
To see that, consider the construction in \cref{fig:construction}.
\begin{figure}[h]
    \centering
    \includegraphics[width=0.75\linewidth]{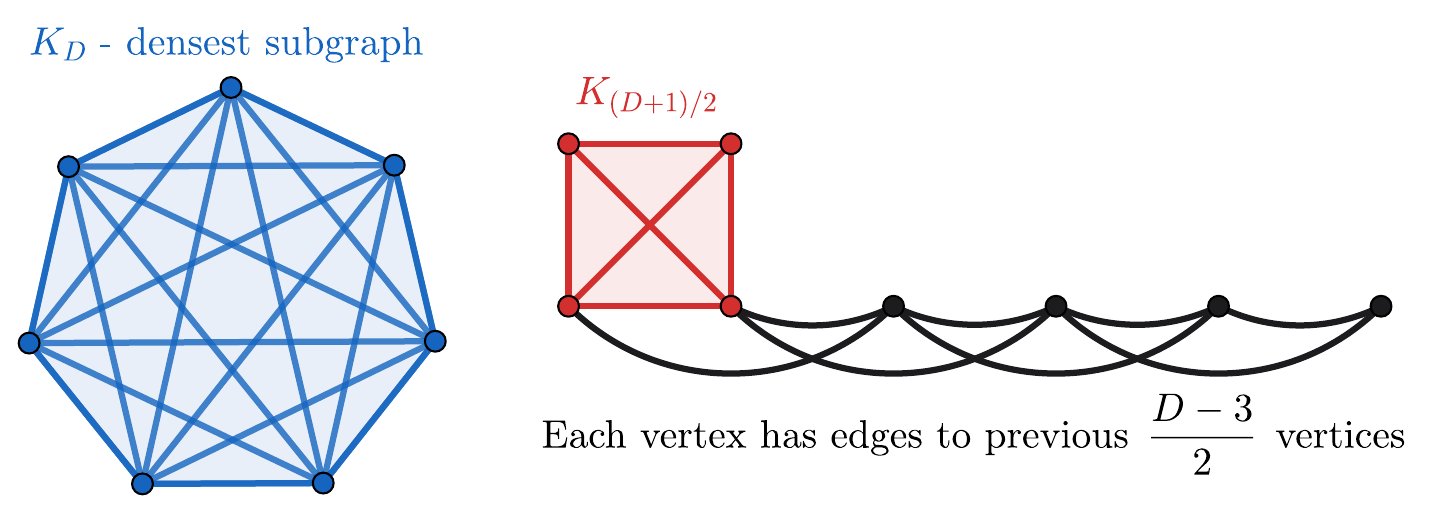}
    \caption{\small An illustration of the construction that requires $\Omega(n)$ iterations of peeling. We observe that the added vertices (black vertices) can extend as far as desired and will be removed one by one from right to left.}
    \label{fig:construction}
\end{figure}

We construct an undirected graph; we obtain its directed version by considering each undirected edge as two directed ones.
Let the densest subgraph be the complete graph $K_D$, which has density $D-1$ and $z = 1$, so we will peel vertices with degree less than $(D-1)/2$.
Then, disjoint to this subgraph, we start with a $\frac{D+1}{2}$-complete subgraph with vertices labeled $1,\ldots, \frac{D+1}{2}$. We continually add vertex $i$ with edges to vertices $i - \frac{D-3}{2} + 1, i - \frac{D-1}{2} + 2, \ldots, i - 1$ for $i > \frac{D+1}{2}$.
This disjoint subgraph and all of its subgraphs have a smaller density than $K_D$, and the number of iterations of peeling is linear with the number of vertices we add, resulting in $\Omega(n)$ iterations of peeling.

We make a crucial observation that enables us to avoid this behavior -- if a set $S$ or $T$ does not decrease in size by a factor of $(1+\eps)$ at least, the current $(S, T)$ subgraph already achieves the desired approximation
(see the proof of \cref{theorem:approx} for details).
This stopping criterion -- implemented by \cref{line:base:A-small,line:base:B-small} -- enables us to ensure sufficient progress in each peeling iteration.
\begin{algorithm}[H]
\caption{Computes a $2(1+\eps)$-approximate directed densest subgraph assuming that $D = \rho(\Sstar, \Tstar)$ and $z = \sqrt{|\Sstar| / |\Tstar|}$
}
\label{alg:base}
\textbf{Input:} bipartite graph $G = (S, T, E)$, parameters $\eps > 0$, $D$ and $z$
\begin{algorithmic}[1]

\State $k_S = D/(2z)$, $k_T = Dz/2$

\While{true}

    \State $A\gets$ all vertices $v\in S$ with $d_{G}(v) < k_S$ \label{line:base:compute-A}
    
    \State $B\gets$ all vertices $v\in T$ with $d_{G}(v) < k_T$ \label{line:base:compute-B}

    \If{$|S| \geq z^2|T|$ and $|A| \leq \frac{\eps}{1+\eps}|S|$ \label{line:base:A-small}}
        \Return $(S, T)$
    \EndIf
    \If{$|S| \leq z^2|T|$ and $|B| \leq \frac{\eps}{1+\eps}|T|$ \label{line:base:B-small}}
        \Return $(S, T)$
    \EndIf

    \State Remove $A$ from $S$ and $B$ from $T$

\EndWhile
\end{algorithmic}
\end{algorithm}

\subsection{Comparison to prior work}
The idea of gradually peeling vertices based on their degrees has appeared in prior works on densest subgraphs, e.g., see~\cite{bahmani2012densest,ghaffari2019improved,pmlr-v235-mitrovic24a}.
Nevertheless, we are unaware of a peeling-based algorithm that uses stopping rules akin to those on \cref{line:base:A-small,line:base:B-small} of \cref{alg:base}. 
On the one hand, a typical peeling iteration for computing densest subgraphs removes vertices whose degree is less than $1+\eps$ times the average degree. 
This immediately implies $O(\log_{1+\eps} n)$ peeling steps and has no need for stopping rules.
However, the average degree changes as the graph changes; so, the degree threshold above which to peel vertices evolves.
On the other hand, \cref{alg:base} uses \emph{the same} degree thresholds, i.e., $k_S$ and $k_T$, throughout the entire execution. 
Then, the advantage of fixed degree thresholds is that they \emph{do not need to be recomputed} in distributed and streaming settings.
This is one of the key properties enabling us to design our algorithms in the coming sections.

\subsection{Analysis of \cref{alg:base}}
We use \emph{peeling iteration} to refer to a single while-loop iteration of \cref{alg:base}. 
During a peeling iteration, all vertices in $S$ or $T$ with degree below the respective threshold are removed, implying:
\begin{lemma}
\cref{alg:base} executes $O(\log_{1+\eps} n)$ peeling iterations.
\end{lemma}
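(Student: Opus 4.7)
The plan is to show that in every non-terminating peeling iteration, at least one of $|S|$ or $|T|$ must shrink by a multiplicative factor of at least $1+\eps$; since both quantities start at most $n$ and stay non-negative, this immediately bounds the total number of iterations.

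First, I would reason about what it means for \cref{alg:base} to \emph{not} return in a given iteration. The algorithm returns precisely when either (a) $|S| \geq z^2|T|$ and $|A| \leq \frac{\eps}{1+\eps}|S|$, or (b) $|S| \leq z^2|T|$ and $|B| \leq \frac{\eps}{1+\eps}|T|$. Since the trichotomy $|S| > z^2|T|$, $|S| = z^2|T|$, $|S| < z^2|T|$ is exhaustive, at least one of the two tests is actually checked in every iteration. Therefore, if the algorithm does not return, either $|S| \geq z^2|T|$ and the first check fails, yielding $|A| > \frac{\eps}{1+\eps}|S|$, or $|S| \leq z^2|T|$ and the second check fails, yielding $|B| > \frac{\eps}{1+\eps}|T|$.

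In the first subcase, after removing $A$ from $S$, the new size satisfies $|S| - |A| < |S| - \frac{\eps}{1+\eps}|S| = \frac{1}{1+\eps}|S|$, so $|S|$ shrinks by at least a factor of $1+\eps$. Symmetrically, in the second subcase $|T|$ shrinks by at least a factor of $1+\eps$. Hence each non-terminating iteration contributes a multiplicative drop of at least $1+\eps$ to at least one of $|S|$ or $|T|$.

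Finally, since initially $|S|, |T| \leq n$ and both remain non-negative integers, the number of iterations in which $|S|$ shrinks by a factor of $1+\eps$ is at most $\log_{1+\eps} n$, and likewise for $|T|$. Summing the two bounds gives at most $2\log_{1+\eps} n = O(\log_{1+\eps} n)$ peeling iterations in total. I do not expect a significant obstacle here; the only subtlety worth stating carefully is that the two stopping conditions together cover every possible relation between $|S|$ and $z^2|T|$, so the ``no-progress'' scenario where neither side shrinks is impossible without triggering a return.
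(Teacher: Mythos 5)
Your proposal is correct and follows essentially the same approach as the paper's proof: the paper's argument is the one-sentence claim that each non-terminating iteration shrinks $|S|$ or $|T|$ by a $(1+\eps)$ factor, and you simply spell out the underlying case analysis (using that the guards $|S|\geq z^2|T|$ and $|S|\leq z^2|T|$ together cover all possibilities) in more explicit detail.
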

\begin{proof}
After a peeling iteration, either $S$ or $T$ decreases in size by a factor of at least $(1+\eps)$, or the algorithm outputs the current vertex sets and terminates (\cref{line:base:A-small,line:base:B-small}). 
Once $S$ and $T$ are empty, \cref{alg:base} finishes. Hence, there are $O(\log_{1+\eps} n)$ iterations of peeling.
\end{proof}
\begin{theorem}
\label{theorem:approx}
Let $(S^*, T^*)$ be a directed densest subgraph of $G$.
For $D = \rho(S^*, T^*)$ and $z = \sqrt{|S^*|/|T^*|}$, \cref{alg:base} outputs a $2(1+\eps)$-approximate directed densest subgraph.
\end{theorem}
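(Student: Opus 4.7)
The plan is to combine an invariant ensuring the algorithm never exhausts $S$ or $T$ with a direct density calculation at the two stopping conditions. First, I would establish the invariant $S^* \subseteq S$ and $T^* \subseteq T$ throughout the execution. By induction on peeling iterations, if $v \in S^*$ is currently in $S$ and $T^* \subseteq T$, then \cref{lemma:existence} gives $d_{G[S,T]}(v) \geq d_{G[S^*, T^*]}(v) \geq D/(2z) = k_S$, so $v$ is not placed into $A$ and remains in $S$; the symmetric argument works for $T^*$. Consequently $S$ and $T$ are never empty, and the algorithm must exit through one of \cref{line:base:A-small,line:base:B-small} (termination itself is guaranteed by the preceding lemma, which bounds the peeling iterations by $O(\log_{1+\eps} n)$).

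Next I would analyze each termination case. Suppose the algorithm returns via \cref{line:base:A-small}, so $|S| \geq z^2|T|$ and $|A| \leq \tfrac{\eps}{1+\eps}|S|$. Then at least $\tfrac{|S|}{1+\eps}$ vertices in $S$ have degree at least $k_S = D/(2z)$ in the current subgraph, hence
\[
|E_G(S, T)| \;\geq\; \frac{|S|}{1+\eps} \cdot \frac{D}{2z}.
\]
Dividing by $\sqrt{|S||T|}$ yields
\[
\rho(S, T) \;\geq\; \frac{D}{2(1+\eps)} \cdot \frac{1}{z}\sqrt{\tfrac{|S|}{|T|}} \;\geq\; \frac{D}{2(1+\eps)},
\]
where the last inequality uses $|S| \geq z^2|T|$. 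Exiting via \cref{line:base:B-small} is completely analogous: the $T$-side degree bound yields an edge count of at least $\tfrac{|T|}{1+\eps} \cdot \tfrac{Dz}{2}$, and $|S| \leq z^2|T|$ supplies the matching factor that absorbs the $z$.

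In both cases the output has density at least $D/(2(1+\eps))$, which is the claimed $2(1+\eps)$-approximation. The only genuine obstacle is ruling out the degenerate outcome in which the algorithm peels everything and returns an empty subgraph of density $0$; this is precisely what the $(S^*, T^*)$-invariant prevents, and it relies crucially on the \emph{fixed} thresholds $k_S, k_T$ being set exactly to the values from \cref{lemma:existence}. The other thing to notice — more a matter of bookkeeping than a real difficulty — is that the conditions on \cref{line:base:A-small,line:base:B-small} are paired with the correct side of the $|S| \lessgtr z^2|T|$ comparison: if we tried to terminate on a small $|A|$ while $|S| \leq z^2|T|$, the factor $\tfrac{1}{z}\sqrt{|S|/|T|}$ could fall below $1$ and the density bound would fail. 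Matching the peeled side to the regime is what makes the calculation close cleanly.
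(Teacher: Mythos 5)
Your proposal is correct and follows essentially the same route as the paper's proof: use \cref{lemma:existence} to show $(S^*, T^*)$ survives every peeling iteration, conclude that $S$ and $T$ remain nonempty so the algorithm must terminate via one of the stopping rules, and then bound the density in each of the two termination cases using $|S| \gtrless z^2|T|$. Your version merely makes explicit two points the paper leaves implicit — the induction establishing $S^* \subseteq S$, $T^* \subseteq T$, and the observation that each stopping condition is paired with the side of the $|S| \lessgtr z^2|T|$ comparison that makes the factor $\tfrac{1}{z}\sqrt{|S|/|T|}$ (or its reciprocal) at least $1$.
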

\begin{proof}
Let $H$ be the subgraph from \cref{lemma:existence}. Since, $D = \rho(S^*, T^*)$ and $z = \sqrt{|S^*|/|T^*|}$, \cref{alg:base} will not remove any vertex from $H$. Hence, $S$ and $T$ produced by \cref{alg:base} are not empty sets.
Therefore, there must exist an iteration of \cref{alg:base} in which $A$ (resp.~$B$) is at most $\eps / (1 + \eps) |S|$ (resp.~$\eps / (1 + \eps) |T|$). That is, in this iteration, $S$ and $T$ would decrease in size by a factor \emph{less} than $1 + \eps$.
Observe that when such an iteration occurs, the algorithm terminates.
Therefore, we have the following two cases.

\textbf{Case } \cref{alg:base} terminates at \cref{line:base:A-small}.
We have that $|A| \leq \frac{\eps}{1+\eps}|S|$ meaning that at least $\frac{1}{1+\eps}|S|$ vertices in $S$ have degree at least $k_S$. This allows us to lower bound the density, giving us
\[\rho(S, T) \geq \frac{k_S\cdot \frac{|S|}{1+\eps}}{\sqrt{|S||T|}} = \frac{D}{2(1+\eps)}\cdot \frac{1}{z}\sqrt{\frac{|S|}{|T|}} \geq \frac{D}{2(1+\eps)}\]
where we used $|S| \geq z^2|T|$. Therefore, the produced subgraph is a $2(1+\eps)$-approximation. 

\textbf{Case } \cref{alg:base} terminates at \cref{line:base:B-small}. Similarly, we have that $|B| \leq \frac{\eps}{1+\eps}|T|$ meaning that at least $\frac{1}{1+\eps}|T|$ vertices in $T$ have degree at least $k_T$. This gives us
\[\rho(S, T) \geq \frac{k_T\cdot \frac{|T|}{1+\eps}}{\sqrt{|S||T|}} = \frac{D}{2(1+\eps)}\cdot z\sqrt{\frac{|T|}{|S|}} > \frac{D}{2(1+\eps)}\]
using the $|S| \leq z^2|T|$ constraint. Therefore, this also produces a $2(1+\eps)$-approximation.
\end{proof}
\section{$\tilde{O}(\sqrt{\log n})$ MPC rounds in the sublinear memory regime}
\label{sec:MPC}
In this section, we extend \cref{alg:base} to the MPC sublinear memory regime.
Directly translating \cref{alg:base} into an MPC is straightforward -- standard MPC algorithmic primitives, e.g., see~\cite{goodrich2011sorting, andoni2018parallel}, enable us to perform one iteration of peeling in $O(1)$ MPC rounds.
Since \cref{alg:base} takes $O(\log n)$ iterations, this direct MPC implementation results in an $O(\log n)$ MPC round algorithm in the sublinear memory regime. 
However, we aim to obtain a quadratically faster algorithm.

\subsection{Our improved approach}

On a high level, we execute the $O(\log n)$  iterations of (a variant of) \cref{alg:base} in $\tO(\sqrt{\log n})$ MPC rounds.
We present the intuition behind this approach in two steps: (i) Recall a known framework for simulating $O(\log n)$ iterations in $\tO(\sqrt{\log n})$ MPC rounds; and (ii) Describe a modified variant of \cref{alg:base} that fits into that simulation framework.

\newcommand{\cAlocal}{\cA_{\textsc{local}}}

\textbf{Known simulation techniques.}
Consider a $T$-iteration algorithm $\cAlocal$ such that: (1) $\cAlocal$ maintains a state of each vertex, e.g., a vertex $v$ is removed or not, and (2) the state of $v$ in iteration $i$ depends only on the states of $v$ and its neighbors in iteration $i - 1$, e.g., $\cAlocal$ decides whether $v$ should be removed in iteration $i$ based on the number of $v$'s non-removed neighbors in iteration $i - 1$.
The \textbf{output} of $\cAlocal$ is the state of each vertex in each of the $T$ iterations.

\emph{Remark}: \cref{alg:base} does not have the properties of $\cAlocal$, e.g., evaluating the conditions on \cref{line:base:A-small,line:base:B-small} requires ``global'' computation. 
However, we note that these lines can be simulated through post-processing, constructing sets $S$ and $T$ for each iteration and comparing their sizes using the states of vertices.
The remaining steps of \cref{alg:base} can be phrased in the language of $\cAlocal$.

\emph{In how many MPC rounds can we execute $\cAlocal$?}
Based on our description, observe that the state of a vertex $v$ at any moment during the execution of $\cAlocal$ is a function of the $v$'s $T$-hop neighborhood.
Let $N_j(v)$ denote the $j$-hop neighborhood of $v$.
This observation gives rise to the following idea: 
\begin{enumerate}[(1)]
    \item\label{item:graph-exp-gather} Gather $N_T(v)$ of each vertex $v \in V$.
    \item Place $N_T(v)$ on a single machine in MPC; $N_T(v)$ for different $v$ are placed on different machines.
    \item In a single MPC round, execute $\cAlocal$ within $N_T(v)$ to learn the state of $v$.
\end{enumerate}

Assuming that $N_T(v)$ fits in the memory of a single machine, a significant advantage of this approach is that it takes only $O(\log T)$ MPC rounds. Namely, \cref{item:graph-exp-gather} can be implemented using a well-known technique \emph{graph exponentiation}~\cite{lenzen2010brief, ghaffari2017distributed}. 
In this technique, $N_{2^i}(v)$ is gathered for each $v \in V$ and for each $i = 0 \cdots \log T$, using the following relation:
\[
    N_{2^{i + 1}}(v) = \bigcup_{w \in N_{2^i}(v)} N_{2^i}(w).
\]
Namely, each iteration of the graph exponentiation technique doubles the radius of the neighborhood collected around a vertex.
To implement this technique in MPC, it is necessary to address: \emph{Can $N_T(v)$ fit into the memory of a single machine in MPC?}
This takes us to the second part of our approach.
\begin{algorithm}
\caption{Finds partial $(2+\eps)$-approximation of directed densest subgraph}
\label{alg:mpc}
\textbf{Input:} bipartite graph $G = (S, T, E)$, $\eps \in (0,1)$, $\delta \in (0, 1)$, $D$ and $z$
\begin{algorithmic}[1]
\State $k_S = D/(2z)$, $k_T = Dz/2$, $\alpha = (1+\eps)^{\sqrt{\log_{1+\eps} n}}$

\State Freeze all vertices in $S$ of degree greater than $k_S\alpha$\label{line:mpc:freeze-S}

\State Freeze all vertices in $T$ of degree greater than $k_T\alpha$\label{line:mpc:freeze-T}

\State Mark as frozen each edge with both endpoints frozen\label{line:mpc:freeze-edges}

\State $f_1 \gets$ number of frozen vertices in $S$

\State $f_2 \gets$ number of frozen vertices in $T$

\If{$f_1 \geq \frac{z\sqrt{|S||T|}}{\alpha}$ or $f_2 \geq \frac{\sqrt{|S||T|}}{z\alpha}$}
    \Return $(S,T)$
\EndIf

\State $p_1\gets \min\rb{1, \frac{18\log n}{\eps^2 k_S}}$

\State $p_2\gets \min\rb{1, \frac{18\log n}{\eps^2 k_T}}$

\State $t \gets \frac{\sqrt{\delta \log_{1+\eps} n}}{2}$

\For{$t$ steps}

    \State $G_1 \gets$ sample of each non-frozen edge of $G$ with probability $p_1$\label{line:mpc:sample-S}

    \State $G_2 \gets$ sample of each non-frozen edge of $G$ with probability $p_2$\label{line:mpc:sample-T}

    \State $A \gets$ all non-frozen vertices $v \in S$ with $d_{G_1}(v) < p k_S$

    \State $B \gets$ all non-frozen vertices $v \in T$ with $d_{G_2}(v) <  p k_T$

    \If{$|S| \geq z^2|T|$ and $|A| \leq \frac{\eps}{1+\eps}|S| - f_1$\label{line:mpc:peel-condition-S}}
        \Return $(S, T)$
    \EndIf

    \If{$|S| \leq z^2|T|$ and $|B| \leq \frac{\eps}{1+\eps}|T| - f_2$\label{line:mpc:peel-condition-T}}
        \Return $(S, T)$
    \EndIf

    \State Remove $A$ from $S$ and $B$ from $T$
    
\EndFor

\State\Return $(S, T)$
\end{algorithmic}
\end{algorithm}

\textbf{Graph sparsification (\cref{alg:mpc}) -- A modified variant of \cref{alg:base}.}
We can alter the recipe above to simulate $\cAlocal$ in MPC while still aiming for $o(T)$ rounds. 
Namely, (i) we can split those $T$ iterations into $T/k$ groups, each consisting of $k$ consecutive iterations, (ii) execute group after group sequentially, such that (iii) each group is executed as described above by using graph exponentiation. 
Setting aside memory constraints, this method uses $O\rb{\frac{T}{k} \cdot \log k}$ MPC rounds.
The larger $k$ is, the fewer MPC rounds are needed. 
However, a larger $k$ implies a bigger $N_k(v)$ which might not fit in a machine's memory.
It turns out that $k = \Theta\rb{\sqrt{\log n}}$ is the largest $k$ our approach can tolerate after performing certain graph sparsification from \cite{ghaffari2019sparsifying} described next.

Fix $t = \Theta(\sqrt{\log n})$.
Ignoring the memory-per-machine constraint, $N_{t}(v)$ can be collected in $O(\log \log n)$ MPC rounds.
However, we cannot guarantee that these neighborhoods fit on a single machine with $O(n^\delta)$ memory, as some vertices can have large degrees.
For example, vertices with a degree of $\omega(n^\delta)$ cannot even store their entire neighborhood on a single machine.
Consequently, even a single iteration of graph exponentiation cannot be executed for such high-degree vertices.
Inspired by this, \cref{alg:mpc} temporarily ``ignores'' these high-degree vertices when performing graph exponentiation.
We call this process \textit{freezing} high-degree vertices, seen on \cref{line:mpc:freeze-S} and \cref{line:mpc:freeze-T}.
Intuitively, freezing enables us to transform the current graph into one having sufficiently small degrees, and hence the graph exponentiation can be executed with $O\rb{n^\delta}$ memory per machine.
On \cref{line:mpc:freeze-edges}, we also ignore the edges between frozen vertices since they do not affect the peeling of non-frozen vertices.

\cref{alg:mpc} samples the graph on top of freezing high-degree vertices on \cref{line:mpc:sample-S} and \cref{line:mpc:sample-T}.
Combining both freezing and sampling, the graph becomes sparse enough so that $\Theta(t)$-hop neighborhoods fit within sublinear memory and peeling is simulated with high probability.
Therefore, if $\Theta(t)$ iterations of peeling can be simulated in $O(\log \log n)$ rounds, then we can simulate the entirety of \cref{alg:base} in $O(\sqrt{\log n}\cdot \log \log n)$ rounds.
Between these phases of simulating $\Theta(\sqrt{\log n})$ iterations of peeling, frozen vertices are updated based on their new degrees.
However, these frozen vertices are not peeled during these phases, and therefore our peeling does not quite match $\Theta(\sqrt{\log n})$ iterations of peeling in \cref{alg:base}.
Nevertheless, our final algorithm uses $\tilde{O}(\sqrt{\log n})$ rounds.
We show that the fraction of frozen vertices is too small to affect the entire peeling process.
Since vertex degrees reduce throughout peeling, the number of frozen vertices also decreases over time.

\subsection{Analysis of \cref{alg:mpc}}
\label{sec:analysis-MPC}
When trying to simulate \cref{alg:base}, it is important to note that our stopping rules are affected since we have limited information on the degrees of frozen vertices.
So, we weaken our rules and this is reflected in the differences between \cref{line:base:A-small} and \cref{line:base:B-small} of \cref{alg:base} and \cref{line:mpc:peel-condition-S} and \cref{line:mpc:peel-condition-T} of \cref{alg:mpc}. \cref{alg:mpc} simulates $t = \Theta(\sqrt{\log n})$ iterations of peeling as described above, but due to frozen vertices, it is not obvious how much the sizes of vertex sets decrease by.
Nevertheless, we establish the claim in \cref{lemma:mpc-converge} about the number of frozen vertices.
We will first use the following lemma to connect degrees between the original graph and the sampled graph.
\begin{lemma}
\label{lemma:sample}
Consider graph $G$, $k > 0$, $\eps\in (0, 1)$. 
Let $H$ be a subgraph of $G$ obtained by sampling each edge of $G$ independently with probability $p = \min\rb{1, \frac{18\log n}{\eps^2 k}}$. 
Then with probability at least $1 - \frac{1}{n^2}$, for all $v\in G$, it holds:
\begin{itemize}
    \item if $d_G(v) \leq \frac{k}{1 + \eps}$, then $d_H(v) < pk$;
    \item if $d_G(v) \geq (1+\eps)k$, then $d_H(v) > pk$.
\end{itemize}
\end{lemma}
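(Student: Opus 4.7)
The plan is a standard multiplicative Chernoff plus union bound. Fix a vertex $v$, and observe that $d_H(v)$ is a sum of $d_G(v)$ independent Bernoulli$(p)$ random variables with mean $\mu_v \eqd p \cdot d_G(v)$. I first dispose of the edge case $p = 1$: here $d_H(v) = d_G(v)$ deterministically and $pk = k$, so both implications follow from $d_G(v) \le k/(1+\eps) < k$ and $d_G(v) \ge (1+\eps)k > k$. For the remainder assume $p = 18\log n /(\eps^2 k)$, so $pk = 18 \log n/\eps^2$.

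For the upper-tail claim, suppose $d_G(v) \le k/(1+\eps)$; then $\mu_v \le pk/(1+\eps)$, so the threshold satisfies $pk \ge (1+\eps)\mu_v$. I apply the Chernoff bound $\Pr[X \ge (1+\delta)\mu] \le \exp\rb{-\min(\delta,\delta^2)\mu/3}$ with $\delta \eqd pk/\mu_v - 1 \ge \eps$ and case-split on $\delta$. If $\delta \le 1$, then $\delta^2 \mu_v \ge \eps \cdot \delta \mu_v = \eps(pk - \mu_v) \ge \eps^2 pk/(1+\eps)$, yielding Chernoff exponent at least $\eps^2 pk/(3(1+\eps))$. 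If $\delta > 1$, then $\mu_v < pk/2$, so $\delta \mu_v = pk - \mu_v > pk/2$, yielding exponent at least $pk/6$. In either subcase, substituting $pk = 18\log n/\eps^2$ and using $\eps \le 1$ gives exponent at least $3 \log n$, so $\Pr[d_H(v) \ge pk] \le n^{-3}$.

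For the lower-tail claim, suppose $d_G(v) \ge (1+\eps)k$; then $\mu_v \ge (1+\eps)pk$, equivalently $pk \le (1 - \eps/(1+\eps))\mu_v$. The standard lower-tail Chernoff bound $\Pr[X \le (1-\delta)\mu] \le \exp(-\delta^2 \mu/2)$ with $\delta = \eps/(1+\eps)$ then gives $\Pr[d_H(v) \le pk] \le \exp\rb{-\eps^2 pk/(2(1+\eps))} \le n^{-4.5}$. A union bound over the (at most) $n$ vertices for each of the two claims yields the stated failure probability of at most $1/n^2$, with room in the constant $18$ if needed.

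The calculation is essentially routine, and the one mildly delicate point I expect to be the main obstacle is that when $d_G(v) \ll k$ the mean $\mu_v$ itself can be small, so the commonly stated $\exp(-\delta^2 \mu/3)$ version of the upper-tail Chernoff is too weak -- one must use the $\min(\delta,\delta^2)$ form (equivalently, case-split on $\delta \le 1$ versus $\delta > 1$) so that the Chernoff exponent scales with $pk$ rather than $\mu_v$.
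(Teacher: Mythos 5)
Your proof is correct and takes essentially the same route as the paper: a multiplicative Chernoff bound per vertex plus a union bound, with the upper tail handled by a case split. Your split on $\delta \le 1$ versus $\delta > 1$ is exactly the paper's split on $d_G(v) > k/2$ versus $d_G(v) \le k/2$ (since $\delta > 1$ iff $\mu_v < pk/2$ iff $d_G(v) < k/2$), and your use of the $\min(\delta,\delta^2)$ form of Chernoff is the unified way of writing the two separate Chernoff variants the paper applies in the two subcases.
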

\begin{proof}
If $p=1$, then the lemma follows immediately. So, let $p = \frac{18\log n}{\eps^2 k}$. Consider any vertex $v\in G$. If $d_G(v) \leq k/(1+\eps)$, then we consider two cases. When $d_G(v) \leq k/2$, we have
\begin{eqnarray*}
\Pr[d_H(v) \geq pk] &\leq& \exp\rb{-\rb{\frac{pk}{\E{d_H(v)}} - 1}\E{d_H(v)}/3}\\
&=& \exp\rb{-\frac{pk}{3} + \frac{\E{d_H(v)}}{3}}\\
&\leq& \exp\rb{-pk/6} \leq \frac{1}{n^3}
\end{eqnarray*}
and when $k/2 < d_G(v) \leq k/(1+\eps)$, we have
\begin{eqnarray*}
\Pr[d_H(v) \geq pk] &\leq& \exp\rb{-\rb{\frac{pk}{\E{d_H(v)}} - 1}^2\E{d_H(v)}/3}\\
&\leq& \exp\rb{-\eps^2pk/6} \leq \frac{1}{n^3}.
\end{eqnarray*}
Therefore, using the union bound over all vertices, we have that $d_H(v) < pk$ for all $v\in G$, satisfying $d_G(v) \leq k/(1+\eps)$, with probability at least $1 - \frac{1}{n^2}$. Now, if $d_G(v) \geq (1+\eps)k$, then we have
\begin{eqnarray*}
\Pr[d_H(v) \leq pk] &\leq& \exp\rb{-\rb{1 - \frac{pk}{\E{d_H(v)}}}^2\E{d_H(v)}/2}\\
&\leq& \exp\rb{-\rb{\frac{\eps}{1+\eps}}^2(1+\eps)pk/2}\\
&\leq& \exp\rb{-\eps^2pk/6} \leq \frac{1}{n^3}.
\end{eqnarray*}
Similarly, using the union bound over all vertices, we have that $d_H(v) > pk$ for all $v\in G$, satisfying $d_G(v) \geq (1+\eps)k$, with probability at least $1 - \frac{1}{n^2}$.
\end{proof}
Using \cref{lemma:sample}, we now prove \cref{lemma:mpc-converge}.
\begin{restatable}{lemma}{lemmampcconverge}
\label{lemma:mpc-converge}
Let $G = (S, T, E)$ be a bipartite graph and $(S^*, T^*)$ be its directed densest subgraph. 
Let $(S', T')$ be the output of \cref{alg:mpc} ran on $G$ given $\frac{\rho(S^*, T^*)}{(1+\eps)^3} \leq D \leq \frac{\rho(S^*, T^*)}{(1+\eps)^2}$ and $\frac{\sqrt{|S^*|/|T^*|}}{(1+\eps)} \leq z \leq \sqrt{|S^*|/|T^*|}$. 
Then, with probability at least $1 - \frac{t}{n^2}$, it holds that:
\begin{itemize}
    \item \textbf{Good approximation.} $(S', T')$ is a $2(1+\eps)^6$-approximate densest subgraph of $G$, or
    \item \textbf{Size reduction.} $(S', T')$ contains a $2(1+\eps)^6$-approximate densest subgraph of $G$ and 
    \[
        |S'| \leq \frac{|S|}{\gamma} \text{ or } |T'| \leq \frac{|T|}{\gamma}
    \]
    where $\gamma = (1+\eps)^{\frac{\sqrt{\delta\log_{1+\eps} n}}{8}}$.
\end{itemize}
\end{restatable}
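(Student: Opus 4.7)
The plan is to apply \cref{lemma:sample} with a union bound across the $t$ peeling iterations of \cref{alg:mpc}: with probability at least $1 - t/n^2$, in every iteration the sampled degree test correctly includes every non-frozen vertex of true degree at most $k_S/(1+\eps)$ in $A$ and excludes every non-frozen vertex of true degree at least $(1+\eps)k_S$, and symmetrically on the $T$-side. Condition on this event. Writing $D^* = \rho(S^*, T^*)$ and $z^* = \sqrt{|S^*|/|T^*|}$, I first argue that $(S^*, T^*)$ persists across all iterations: by \cref{lemma:existence}, every $v \in S^*$ has $d_G(v) \geq D^*/(2z^*)$, and the hypotheses $D \leq D^*/(1+\eps)^2$ and $z \geq z^*/(1+\eps)$ yield $(1+\eps) k_S = (1+\eps) D/(2z) \leq D^*/(2z^*)$; hence $v$ is either frozen on \cref{line:mpc:freeze-S} (and thus excluded from $A$ by construction) or certified by the sampling event to avoid $A$. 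The same reasoning handles $T^*$, so $S^* \subseteq S'$ and $T^* \subseteq T'$; this already guarantees that $(S', T')$ contains the densest subgraph, which is trivially a $2(1+\eps)^6$-approximation.

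The rest is a case analysis on how \cref{alg:mpc} terminates. If it early-terminates at the freezing counts, say with $f_1 \geq z\sqrt{|S||T|}/\alpha$, each frozen vertex in $S$ has true degree exceeding $k_S \alpha$, so $|E(S,T)| \geq f_1 \cdot k_S \alpha \geq D\sqrt{|S||T|}/2$, giving $\rho(S,T) \geq D/2 \geq D^*/(2(1+\eps)^3)$. If instead termination occurs at \cref{line:mpc:peel-condition-S}, then $|S| \geq z^2 |T|$ and $|A| + f_1 \leq \tfrac{\eps}{1+\eps}|S|$, so the $|S| - |A| - f_1 \geq |S|/(1+\eps)$ non-frozen survivors each have true degree at least $k_S/(1+\eps)$ by the sampling event, while the $f_1$ frozen vertices contribute even more. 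Summing and using $|S| \geq z^2 |T|$ yields
\[
\rho(S,T) \;\geq\; \frac{k_S}{(1+\eps)^2}\sqrt{\frac{|S|}{|T|}} \;\geq\; \frac{D}{2(1+\eps)^2} \;\geq\; \frac{D^*}{2(1+\eps)^5}.
\]
The symmetric argument handles termination at \cref{line:mpc:peel-condition-T}. In every subcase here, the ``good approximation'' conclusion holds with room to spare.

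The final case is that all $t$ iterations complete without terminating. The preservation argument already gives that $(S', T')$ contains a $2(1+\eps)^6$-approximate densest subgraph, so only the shrinkage claim remains. In any iteration where we did not stop, either the $S$-regime held and $|A| > \tfrac{\eps}{1+\eps}|S| - f_1$, giving $|S_{\text{new}}| < |S|/(1+\eps) + f_1$, or the $T$-regime held and the analogous bound applied. Because the freezing checks did not trigger, $f_1 \leq z\sqrt{|S||T|}/\alpha$ and $f_2 \leq \sqrt{|S||T|}/(z\alpha)$, which combined with the regime inequalities give $f_1 \leq |S|/\alpha$ in $S$-regime iterations and $f_2 \leq |T|/\alpha$ in $T$-regime ones. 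Because $\alpha = (1+\eps)^{\sqrt{\log_{1+\eps} n}}$ dominates $(1+\eps)^{t}$, the per-iteration shrinkage factor on the active side is at most $(1+\eps)^{-1/2}$ for sufficiently large $n$. By pigeonhole, at least $t/2$ of the $t$ iterations lie in a single regime; chaining the shrinkages there gives $(1+\eps)^{-t/4} = 1/\gamma$ on the corresponding side, establishing $|S'| \leq |S|/\gamma$ or $|T'| \leq |T|/\gamma$.

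The main obstacle is in this last case: the freezing counts $f_1, f_2$ are bounded against the initial sizes $|S|, |T|$ input to \cref{alg:mpc}, whereas the per-iteration shrinkage argument compares $f_1$ against the \emph{current} $|S_i|$ inside the loop. The clean resolution is to show that whenever $|S_i|$ has already dropped so far that the comparison $f_1 \leq |S_i|/\alpha$ would break, the size-reduction conclusion $|S_i|/|S| \leq 1/\gamma$ has in fact already been witnessed; otherwise the additive $f_1$ term is still at most a $1/\alpha$ fraction of $|S_i|$ and the multiplicative shrinkage argument proceeds, making the telescoping across regime-switches go through. The sampling-based preservation of $(S^*, T^*)$ and the density computations in the termination subcases are direct applications of the earlier lemmas; the interplay between freezing, regime switching, and the $\alpha$--$\gamma$ relationship is the delicate combinatorial core of the argument.
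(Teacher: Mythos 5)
Your overall structure is the right one and matches the paper's at a high level: apply \cref{lemma:sample} with a union bound over the $t$ iterations, show that the densest subgraph is never peeled so it survives in $(S',T')$, dispatch the freezing-count and peel-condition early-termination cases as ``good approximation,'' and handle the case of running all $t$ iterations as ``size reduction.'' Your explicit containment argument for $(S^*,T^*)$ is a nice addition that the paper leaves implicit, and your early-termination density computations are correct.

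The genuine gap is exactly where you flag it, and your proposed resolution does not close it. You assert that in an $S$-regime iteration one has $f_1 \leq |S_i|/\alpha$ with $|S_i|$ the \emph{current} size, but $f_1$ is fixed at the start and only $f_1 \leq z\sqrt{|S_0||T_0|}/\alpha$ is known. Your patch says: whenever $|S_i|$ drops below $f_1\alpha$, the reduction $|S_i| \leq |S_0|/\gamma$ is already witnessed. Unpacking, $|S_i| < f_1\alpha \leq z\sqrt{|S_0||T_0|} = |S_0|/\sqrt{c}$ where $c = |S_0|/(z^2|T_0|)$, so this only certifies the size reduction when $\sqrt{c} \geq \gamma$, i.e., $c \geq \gamma^2 = (1+\eps)^{\sqrt{\delta\log_{1+\eps} n}/4}$. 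In the window $1 < c < \gamma^2$ the implication fails and the recursion gets stuck. The paper avoids this by normalizing (WLOG $c > 1$, so $f_1 \leq |S_0|/(\sqrt{c}\alpha)$) and then explicitly case-splitting on whether $c$ exceeds $(1+\eps)^{\sqrt{\delta\log_{1+\eps}n}}$: when $c$ is small, both $f_1$ and $f_2$ are bounded by small fractions of the \emph{initial} $|S_0|$ and $|T_0|$ respectively, and those fixed bounds can be fed into the telescoping recursion on both sides; when $c$ is large, $f_1$ is negligibly small relative to $|S_0|$, and if the process ever leaves the $S$-regime then $|S|$ has already shrunk by a factor $c > \gamma$ simply because $|T|$ is non-increasing. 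A repairable version of your argument exists -- observe that as long as $|S_i| \geq |S_0|/\gamma$ one has $f_1/|S_i| \leq (\gamma/\alpha) = (1+\eps)^{\sqrt{\delta\log_{1+\eps} n}/8 - \sqrt{\log_{1+\eps} n}} \to 0$, which is small enough for the $(1+\eps)^{-1/2}$ per-step rate -- but even this requires the WLOG $c>1$ normalization to get $f_1 \leq |S_0|/\alpha$, and a separate treatment of what happens when the process exits the initial regime, both of which your write-up omits.
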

\begin{proof}
Without loss of generality, assume the inputs of \cref{alg:mpc} satisfy $\frac{|S|}{|T|} = cz^2$ for some $c > 1$.
Now, if $f_1 \geq \frac{z\sqrt{|S||T|}}{\alpha}$, we use the high degrees of frozen vertices to see that
\[\rho(G) \geq \frac{f_1k_S\alpha}{\sqrt{|S||T|}} \geq \frac{D}{2} \geq \frac{\rho(S^*, T^*)}{2(1+\eps)^3}\]
meaning that $G$ is a $2(1+\eps)^3$-approximation of the densest subgraph.
A similar argument can be made if $f_2 \geq \frac{\sqrt{|S||T|}}{z\alpha}$, so we assume that both of these conditions are not satisfied.

Now, if we look at the peeling process, it closely simulates \cref{alg:base}.
Using \cref{lemma:sample} and the constraints on $D$ and $z$, we see that we're removing vertices in $S$ with degree below $k_S/(1+\eps)$ and keeping vertices in $S$ with degree at least $(1+\eps)k_S$ where
\[\frac{\rho(S^*, T^*)}{2(1+\eps)^3\sqrt{|S^*||T^*|}} \leq k_S \leq \frac{\rho(S^*, T^*)}{2(1+\eps)\sqrt{|S^*||T^*|}}\]
with probability at least $1 - \frac{t}{n^2}$, using the union bound over all $t$ iterations of peeling.
Similarly, we see that we're removing vertices in $T$ with degree below $k_T/(1+\eps)$ and keeping vertices in $T$ with degree at least $(1+\eps)k_T$ where
\[\frac{\rho(S^*, T^*)}{2(1+\eps)^4\sqrt{|S^*||T^*|}} \leq k_T\leq \frac{\rho(S^*, T^*)}{2(1+\eps)^2\sqrt{|S^*||T^*|}}\]
also with probability at least $1 - \frac{t}{n^2}$.
Following the same argument behind \cref{theorem:approx} but using these bounds on $k_S$ and $k_T$, we see that returning early will result in a $2(1+\eps)^6$-approximation in total.
Therefore, we will assume \cref{alg:mpc} does not return early.
Then, we consider two cases.

If $c \leq (1+\eps)^{\sqrt{\delta\log_{1+\eps} n}}$, then we see that
\[f_1 < \frac{z\sqrt{|S||T|}}{\alpha} \leq \frac{|S|}{(1+\eps)^{\sqrt{\delta\log_{1+\eps} n}}}\]
and
\[f_2 < \frac{\sqrt{|S||T|}}{z\alpha} \leq \frac{|T|}{(1+\eps)^{\sqrt{\delta\log_{1+\eps} n}/2}}.\]
Therefore, after $t$ iterations of peeling, we have that the final returned vertex sets $(S', T')$ satisfy
\[|S'| \leq \frac{|S| - (1+\eps)f_1}{(1+\eps)^{\lfloor t/2\rfloor}} + (1+\eps)f_1 \leq \frac{|S|}{(1+\eps)^{\sqrt{\delta\log_{1+\eps} n}/8}}\]
or
\[|T'| \leq \frac{|T| - (1+\eps)f_2}{(1+\eps)^{\lfloor t/2\rfloor}} + (1+\eps)f_2 \leq \frac{|T|}{(1+\eps)^{\sqrt{\delta\log_{1+\eps} n}/8}}\]
because of \cref{line:mpc:peel-condition-S} and \cref{line:mpc:peel-condition-T}, using the upper bounds on $f_1$ and $f_2$, and sufficiently large $n$.

On the other hand, if $c > (1+\eps)^{\sqrt{\delta\log_{1+\eps} n}}$, then we still see that $f_1 \leq \frac{|S|}{(1+\eps)^{\sqrt{\delta\log_{1+\eps} n}}}$.
So, if $|S|/|T| \geq z^2$ remains true throughout the whole algorithm, then due to \cref{line:mpc:peel-condition-S} we have that the final returned vertex sets $(S', T')$ satisfy 
\[|S'| \leq \frac{|S| - (1+\eps)f_1}{(1+\eps)^{t}} + (1+\eps)f_1 \leq \frac{|S|}{(1+\eps)^{\sqrt{\delta\log_{1+\eps} n}/8}}\]
following a similar argument as above.
Otherwise, at some point we have that $|S|/|T| < z^2$, meaning that $|S|$ must have decreased by at least a factor of $c$.
As a result, we also have that $|S'| \leq \frac{|S|}{(1+\eps)^{\sqrt{\delta\log_{1+\eps} n}/8}}$ in this case due to the lower bound on $c$.
\end{proof}

Following the ideas described before, we invoke \cref{alg:mpc} $\Theta(\sqrt{\log n})$ times to simulate all iterations of peeling, resulting in the following theorem.
\begin{restatable}{theorem}{mainmpctheorem}
\label{theorem:mpc}
There exists a sublinear MPC algorithm that runs in $\tilde{O}(\sqrt{\log n})$ rounds and attains a $2(1+\eps)^6$-approximation of the directed densest subgraph with probability at least $1 - \frac{1}{n}$. The algorithm uses $O(n^\delta)$ memory per machine and $O(n^{1+\delta} + m)$ total memory for $\delta \in (0,1)$.
\end{restatable}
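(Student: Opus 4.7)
The plan is to boost \cref{alg:mpc} into a full MPC algorithm by invoking it $\Theta(\sqrt{\log n})$ times in sequence, simulating each invocation in $O(\log\log n)$ rounds via graph exponentiation on a sparsified/frozen graph, and searching over guesses of $D$ and $z$ in parallel.

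\textbf{Search over $D$ and $z$.} The optimal density satisfies $\rho(S^*,T^*) \in [1,n^2]$ and $z \in [1/\sqrt n, \sqrt n]$. So there are only $O(\log_{1+\eps}^2 n)$ pairs $(D,z)$ on a $(1+\eps)$-geometric grid that meet the precondition of \cref{lemma:mpc-converge}. I would run one copy of the algorithm below for each such pair in parallel, then output the densest subgraph found across all copies. Since $m \le n^2$ and each copy uses $O(n^{1+\delta}+m)$ total memory, the aggregate stays within the claimed bounds after enlarging $\delta$ slightly.

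\textbf{Iterated calls to \cref{alg:mpc}.} Fix a valid $(D,z)$. I would call \cref{alg:mpc} repeatedly, each time on the current bipartite subgraph $(S,T)$. By \cref{lemma:mpc-converge}, each call either returns a $2(1+\eps)^6$-approximation, or strictly shrinks $|S|$ or $|T|$ by the factor $\gamma = (1+\eps)^{\sqrt{\delta\log_{1+\eps} n}/8}$. Starting from $|S|,|T|\le n$, after $O(\log_\gamma n)=O(\sqrt{\log_{1+\eps} n})$ calls either some call has already terminated with a good approximation, or both sides shrink below $1$, at which point the accumulated peeling matches the termination condition of \cref{alg:base} and we return. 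So the outer loop uses only $O(\sqrt{\log n})$ iterations.

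\textbf{Simulating one call of \cref{alg:mpc} in $O(\log\log n)$ MPC rounds.} This is the crux. Inside \cref{alg:mpc} the only ``local'' decisions per iteration are: is a vertex frozen, is it in $A$ or $B$, does a sampled edge survive. Freezing uses only the degree in the current working graph, which can be computed in $O(1)$ rounds. After freezing, sampling gives a graph $G_1, G_2$ whose non-frozen vertices have degree at most $pk\alpha = O(\log n)\cdot \alpha = O\bigl(\log n\cdot (1+\eps)^{\sqrt{\log_{1+\eps} n}}\bigr)$ in expectation, with a Chernoff tail bound. Thus a $t$-hop neighborhood around any non-frozen vertex has size at most
\[
    \bigl(O(\log n)\cdot(1+\eps)^{\sqrt{\log_{1+\eps} n}}\bigr)^{t} = n^{O(\eps)}\cdot\poly\log n,
\]
where $t=\Theta(\sqrt{\log n})$; choosing $\eps$ (or the hidden constant in $t$) small relative to $\delta$, this fits in $O(n^\delta)$ words on one machine. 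I would then use graph exponentiation~\cite{lenzen2010brief,ghaffari2017distributed}: in iteration $i = 0,1,\ldots,\lceil\log_2 t\rceil$, each non-frozen vertex $v$ doubles the radius of its collected neighborhood by exchanging $N_{2^i}(v)$ with the vertices already in $N_{2^i}(v)$. After $O(\log\log n)$ MPC rounds every such $v$ holds its $t$-hop non-frozen induced subgraph on its home machine and can locally simulate the $t$ peeling iterations of \cref{alg:mpc} to decide its final state. The global tests on \cref{line:mpc:peel-condition-S,line:mpc:peel-condition-T} only depend on the cardinalities $|A|,|B|,f_1,f_2$, which can be tracked by a single prefix-sum / aggregate in $O(1)$ rounds per iteration and hence in $O(\log\log n)$ rounds overall using sorting-based primitives \cite{goodrich2011sorting,andoni2018parallel}.

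\textbf{Accounting.} Multiplying $O(\sqrt{\log n})$ outer calls by $O(\log\log n)$ rounds per call gives $\tilde O(\sqrt{\log n})$ MPC rounds, and memory is dominated by storing the neighborhoods above plus the input. The approximation ratio is inherited from \cref{lemma:mpc-converge}. For the high-probability claim, each call fails with probability at most $t/n^2$ (union bound over sampling failures in \cref{lemma:sample}); a further union bound over $O(\sqrt{\log n})$ calls, $O(\log^2 n)$ guesses of $(D,z)$, and the Chernoff bound on neighborhood sizes yields total failure probability at most $1/n$.

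The main obstacle I expect is the neighborhood-size bound: one must argue that after both freezing and edge-sampling, high-probability bounds on $t$-hop ball sizes hold simultaneously for all non-frozen vertices, and that the constants interact well enough that $(1+\eps)^{\sqrt{\log n}\cdot t}$ stays $n^{o(1)}$ — in particular, one needs $t$ to be a sufficiently small constant times $\sqrt{\delta \log_{1+\eps} n}$ so that this product is $n^{\delta/2}$ rather than $n^{\Omega(1)}$, which forces the precise setting of $t$ inside \cref{alg:mpc}.
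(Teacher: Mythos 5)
Your proposal follows essentially the same route as the paper's proof: a $(1+\eps)$-geometric grid search over $(D,z)$, $\Theta(\sqrt{\log n})$ sequential invocations of \cref{alg:mpc} (each returning a good approximation or shrinking one side by $\gamma$, per \cref{lemma:mpc-converge}), each invocation simulated in $O(\log\log n)$ rounds via graph exponentiation over the sampled, frozen graph, with a union bound over \cref{lemma:sample} failures and guesses to get the $1-1/n$ probability. The one small imprecision is your final remark that $\alpha^{t}$ should be $n^{\delta/2}$: with $\alpha = (1+\eps)^{\sqrt{\log_{1+\eps}n}}$ and $t = \sqrt{\delta\log_{1+\eps}n}/2$ one gets $\alpha^{t} = n^{\sqrt{\delta}/2}$, which the paper's own accounting also asserts fits in $O(n^{\delta})$; the two match in approach and the paper does not resolve this tension any more explicitly than you do.
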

\begin{proof}
Let $\cAMPC$ be our MPC algorithm. We first describe $\cAMPC$ and then provide its analysis.

\paragraph{Algorithm description.}
$\cAMPC$ considers all guesses on $1 \leq D\leq n$ and $\frac{1}{\sqrt{n}} \leq z \leq \sqrt{n}$ in parallel using powers of $1+\eps$.
For each of these parallel instances of $D$ and $z$,
\begin{itemize}
    \item The algorithm invokes \cref{alg:mpc} $16\sqrt{\delta\log_{1+\eps} n}/\delta$ times, constantly providing back the output of the previous invocation, with the edges of the induced subgraph of the new vertex sets, into the new invocation.

    \item If \cref{alg:mpc} ever returns early in any of these invocations, then we call the pair of vertex sets that is returned early a pair of \emph{potential vertex sets}.
\end{itemize}
$\cAMPC$ outputs the pair of potential vertex sets with the largest density over all instances of $D$ and $z$.

\paragraph{Algorithm memory and round complexity.}
Using the same application of the graph exponentiation technique shown in \cite[Appendix C.2]{ghaffari2019improved}, we see that each machine will only use $O(n^\delta)$ memory since neighborhoods will have size bounded by
\[\rb{\frac{36\log n}{\eps^2}\cdot \alpha}^t \in O(n^\delta)\]
with high probability and \cref{alg:mpc} will take $O(\log \log n)$ rounds.
Including the number of invocations to \cref{alg:mpc}, we have that $\cAMPC$ takes $O(\sqrt{\log n}\cdot \log \log n)$ rounds.
Running the algorithm in parallel over all instances of $D$ and $z$ adds an $O(\log^2 n)$ factor to the memory per machine.

\paragraph{Algorithm approximation.}
Note that there will be guesses of $D$ and $z$ that satisfy
\[\frac{\rho(S^*, T^*)}{(1+\eps)^3} \leq D \leq \frac{\rho(S^*, T^*)}{(1+\eps)^2} \text{ and } \frac{\sqrt{|S^*|/|T^*|}}{(1+\eps)} \leq z \leq \sqrt{|S^*|/|T^*|}.\]
Using these guesses and the approximation guarantees from \cref{lemma:mpc-converge}, we guarantee these potential vertex sets to be a $2(1+\eps)^6$-approximation with at least probability $1 - \frac{1}{n}$, taking the union bound over the invocations of \cref{alg:mpc}.
Since $\cAMPC$ selects the vertex sets with the largest density, it attains at least a $2(1+\eps)^6$-approximation of the directed densest subgraph.
\end{proof}
\section{Single-pass semi-streaming algorithm}
\label{sec:semi-streaming}
In this section, we extend \cref{alg:base} to the single-pass semi-streaming setting.
It is unclear how to adapt the sampling used for the MPC algorithm since the number of edges sampled could be $\omega(n \poly\log n)$, surpassing the memory limit of the semi-streaming model.
Nonetheless, we show that an $O(\log n)$-approximate directed densest subgraph can be obtained by maintaining \textbf{only vertex degrees} throughout the stream.
This directly results in a $\tO(n)$ memory requirement, as a vertex's degree can be maintained using a simple integer counter.
To the best of our knowledge, this is the first semi-streaming single-pass algorithm for the directed densest subgraph problem that achieves better than $\poly n$ approximation.

\subsection{Our approach}
\begin{algorithm}
\caption{Finds $O(\log n)$-approximation of directed densest subgraph}
\label{alg:stream}
\textbf{Input:} bipartite graph $G = (S, T, E), \eps > 0$, $D$ and $z$
\begin{algorithmic}[1]
\State $k_S = D/(2z)$, $k_T = Dz/2$
\State $l_S(v), l_T(v)\gets 0$ for all $v\in V$ \Comment{$l$ stands for level}
\State $d_S(v), d_T(v) \gets 0$ for all $v \in V$
\Comment{$d$ stands for vertex degree in \textbf{its current} level}
\While{stream not empty}
    \State $(u, v) \gets$ next edge from stream
    \If{$l_S(u)\leq l_T(v)$}
        $d_S(u) \gets d_S(u) + 1$
    \EndIf
    \If{$l_S(u)\geq l_T(v)$}
        $d_T(v) \gets d_T(v) + 1$
    \EndIf
    \If{$d_S(u) \geq k_S$}
        \State $l_S(u) \gets l_S(u) + 1$
        \State $d_S(u) \gets 0$
    \EndIf
    \If{$d_T(v) \geq k_T$}
        \State $l_T(v) \gets l_T(v) + 1$
        \State $d_T(v) \gets 0$
    \EndIf
\EndWhile
\State $S_i \gets \cb{v : l_S(v)\geq i}$ for all $0\leq i \leq 2\log_{1+\eps} n$
\State $T_i \gets \cb{v : l_T(v)\geq i}$ for all $0\leq i \leq 2\log_{1+\eps} n$
\For{$i = 1\ldots2\log_{1+\eps} n$}
    \If{$|S_i| \geq z^2|T_i|$ and $|S_{i}| \geq \frac{|S_{i-1}|}{1+\eps}$}
        \Return $(S_i, T_i)$
    \EndIf
    \If{$|S_i| \leq z^2|T_i|$ and $|T_{i}| \geq \frac{|T_{i-1}|}{1+\eps}$}
        \Return $(S_i, T_i)$
    \EndIf
\EndFor
\end{algorithmic}
\end{algorithm}
\textit{How can vertex degrees be leveraged to (approximately) simulate \cref{alg:base}?}
We observe that \cref{alg:base} implicitly computes a vertex-vector $l$, where $l(v)$ is the peeling iteration after which vertex $v$ was removed from the graph. We refer to $l(v)$ as the \emph{level} of $v$.
Having $l$ suffices to recover $(S, T)$ in each iteration of the while-loop.
Our approach approximates $l(v)$ for each vertex by using the evolution of vertex degrees as the stream is read.
We now discuss details.

As a reminder, we think of an input graph as bipartite with the bipartitions $S$ and $T$; details are described in \cref{sec:prelim}.
As our algorithm scans the edges in a stream, the maintained vertex degrees increase.
Initially, each vertex level and vertex degree is $0$.
A vertex's level $l(v)$ increases when the algorithm is certain that $v$ will not be deleted from the graph within the first $l(v)$ peeling iterations.

On a high level, once a vertex degree reaches $D/(2z)$ if it is in $S$ or $Dz/2$ if it is in $T$, it will not be removed during the first peeling iteration. 
Therefore, we can confidently increase its level to $1$.
For a vertex in $S$ to move to a level of at least $2$, its degree must be at least $D/(2z)$ after the $1$st peeling iteration. 
Unfortunately, before reading the entire stream, our single-pass algorithm cannot say with certainty that a vertex will be peeled in the $1$st peeling iteration. However, if a vertex $v$ is \textbf{not} peeled in the $1$st iteration, an algorithm learns that information about $v$ potentially before reading the entire stream. 
Inspired by this, we estimate the levels of vertices as follows.

For an edge $(u, v)$ on the stream, we increase $d(u)$ only if $l(v) \ge l(u)$.
This guarantees we only count edges where the other vertex has not been removed yet.
Once a vertex moves to a higher level, we need to determine its degree at that level.
However, it is not obvious how to calculate this degree without maintaining the edges in the subgraphs of each level.
Therefore, in \cref{alg:stream}, we assume the worst case and reset the degree of the vertex to $0$ when it moves to a higher level.
This poses challenges in analyzing whether these degree estimates are accurate enough to output an approximation of the directed densest subgraph, especially when the stream of edges is adversarial.
Fortunately, we show that \cref{alg:stream} ran for specific input of $D$ and $z$ produces an $O(\log n)$-approximation.
Precisely, we show the following results.

\begin{lemma}
\label{lemma:stream-nonempty}
Let the directed densest subgraph have vertex sets $(S^*, T^*)$. Then, $(S_i, T_i)$ in \cref{alg:stream} will be non-empty vertex sets for all $0\leq i \leq 2\log_{1+\eps} n$ given $D \leq \frac{\rho(S^*, T^*)}{8(1+\eps)\log_{1+\eps} n}$ and $\frac{\sqrt{|S^*|/|T^*|}}{(1+\eps)} \leq z \leq \sqrt{|S^*|/|T^*|}$.
\end{lemma}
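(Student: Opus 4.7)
The plan is to combine the existence subgraph from \cref{lemma:existence} with a double-counting argument over the edges of $H$ to show that enough vertices on both bipartite sides are promoted to level at least $2\log_{1+\eps} n$.

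First I would apply \cref{lemma:existence} to produce an induced subgraph $H$ on $(S^*, T^*)$ satisfying $d_H(u) \ge \rho(S^*,T^*)/(2z^*)$ for every $u \in S^*$ and $d_H(v) \ge \rho(S^*,T^*)\,z^*/2$ for every $v \in T^*$, where $z^* = \sqrt{|S^*|/|T^*|}$. Plugging in the hypotheses $D \le \rho(S^*,T^*)/(8(1+\eps)\log_{1+\eps} n)$ and $z^*/(1+\eps) \le z \le z^*$, these translate into $d_H(u) \ge 8(1+\eps)\log_{1+\eps} n \cdot k_S$ for $u \in S^*$ and $d_H(v) \ge 8\log_{1+\eps} n \cdot k_T$ for $v \in T^*$, so every densest-subgraph vertex already has $H$-degree comfortably exceeding what $2\log_{1+\eps} n$ promotions would cost.

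Next I would argue by contradiction. Suppose some level $i \le 2\log_{1+\eps} n$ has $S_i = \emptyset$ (the case $T_i = \emptyset$ is symmetric). Then $l_S(u) < i$ at stream end for every $u$. The crucial structural observation is that \cref{alg:stream} only increments $d_S(u)$ while $l_S(u) \le l_T(v)$ upon processing the edge $(u,v)$; consequently each $u \in S^*$ accrues at most $\lceil k_S\rceil(l_S(u)+1) \le \lceil k_S\rceil \cdot i$ increments across the entire stream, and the symmetric bound holds for $d_T(v)$ with $v \in T^*$. Because every edge of $H$ triggers at least one increment on either its $S$-endpoint or its $T$-endpoint (both if the levels coincide at processing time), summing the per-vertex caps and comparing with the lower bound $|E(H)| \ge \rho(S^*,T^*)\sqrt{|S^*||T^*|}$ together with the $D$-hypothesis yields the desired contradiction.

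The main obstacle is tracking the two sides jointly while keeping the constants tight. The contradiction argument for $S_i = \emptyset$ uses the $S^*$-degree bound, while the symmetric argument for $T_i = \emptyset$ uses the $T^*$-degree bound; because $z$ may differ from $z^*$ by up to a factor $(1+\eps)$, the two bounds carry slightly asymmetric $(1+\eps)$-factors that must be carried through carefully. A further subtlety arises when $k_S$ or $k_T$ is smaller than $1$, since then a single incident edge suffices to trigger a promotion: the correct per-promotion budget is $\lceil k_\cdot\rceil$ rather than $k_\cdot$, introducing additive $|S^*|+|T^*|$ error terms into the upper bound that must be absorbed into the slack afforded by the $8(1+\eps)\log_{1+\eps} n$ factor in the hypothesis on $D$.
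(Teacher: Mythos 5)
Your high-level approach is the same counting argument as the paper's: both invoke \cref{lemma:existence} to get per-vertex degree lower bounds inside $(S^*, T^*)$, then compare a per-vertex ``consumption budget'' against $|E_G(S^*, T^*)|$ and conclude not all of $S^*, T^*$ can be discarded. The paper packages the budget as ``pre-ignored'' edges plus edges consumed by a peeling process on $(S^*,T^*)$; you package it as a direct double count of counter increments under a contradiction hypothesis. These are two framings of one and the same argument, with yours somewhat more explicit about the mechanics of the resets.

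There is, however, a gap in the $T$-side cap as you have written it. From $S_i = \emptyset$ you correctly get $l_S(u) < i$ for every $u$, hence the cap $\lceil k_S\rceil(l_S(u)+1) \le \lceil k_S\rceil \cdot i$ on increments to $d_S(u)$ -- and note that this cap does not in fact use the structural observation you invoke. But the ``symmetric bound'' $\lceil k_T\rceil(l_T(v)+1) \le \lceil k_T\rceil\cdot i$ requires $l_T(v) < i$, which is \emph{not} implied when only $S_i = \emptyset$: a vertex $v \in T^*$ may well end the stream at a level far above $i$. The repair is exactly the observation you cite, but applied on the $T$-side: an edge $(u,v) \in E(H)$ can increment $d_T(v)$ only when $l_T(v) \le l_S(u) < i$, so all $H$-edge increments to $d_T(v)$ land in the first $i$ fill-and-reset cycles of $d_T(v)$, giving the cap $\lceil k_T\rceil\cdot i$ on $H$-edge increments to $d_T(v)$ irrespective of $v$'s final level. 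You should direct the observation there, not at $d_S$. Two smaller points: the $(1+\eps)$ slack attaches to the $T$-side degree bound, $d_H(v) \ge 8(1+\eps)\log_{1+\eps} n \cdot k_T$, and not the $S$-side (since $z \le z^*$ makes $k_T \le Dz^*/2$ while $1/z \le (1+\eps)/z^*$ makes $k_S \le (1+\eps)D/(2z^*)$), though this is cosmetic; and your flag of the $\lceil\cdot\rceil$ subtlety is fair -- the paper's own proof glosses over it as well.
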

\begin{proof}
Consider $(S_i, T_i)$ after running \cref{alg:stream}. 
Then, in the best-case scenario, all the edges in its induced subgraph, $E_G(S_i, T_i)$, were used to determine the peeling of its vertices for later vertex sets.
However, because the degree of a vertex $v$ is reset to $0$ every time $v$ moves to a higher level, information about edges incident to $v$ up to that point is -- informally speaking -- erased; we call such edges \emph{ignored}.
At most $i\cdot k_S$ incident edges to a vertex in $S_i$ are ignored, and, similarly, at most $i\cdot k_T$ incident edges to a vertex in $T_i$ are ignored.
Hence, throughout the entire algorithm, each vertex in $S_{2\log_{1+\eps} n}$ ignores at most $2k_S\log_{1+\eps} n$ edges incident to it and each vertex in $T_{2\log_{1+\eps} n}$ ignores at most $2k_T\log_{1+\eps} n$ edges incident to it.

Consider the induced subgraph on $(S^*, T^*)$.
Since $(S^*, T^*)$ is the densest subgraph, we observe that all vertices in $S^*$ have degree at least $8k_S\log_{1+\eps} n$ and all vertices in $T^*$ have degree at least $8k_T\log_{1+\eps} n$ by following the proof of \cref{lemma:existence} and using the bounds on $D$ and $z$.
We claim that some non-empty subsets of these vertex sets will remain in $(S_{2\log_{1+\eps} n}, T_{2\log_{1+\eps} n})$, which is enough to prove the lemma.
Rather than letting edges be ignored throughout the algorithm, we assume they're all ignored \emph{at the beginning}: $2k_S\log_{1+\eps} n$ edges incident to each vertex in $S^*$ are ignored and $2k_T\log_{1+\eps} n$ edges incident to each vertex in $T^*$ are ignored.
Then, we look at how vertices within the induced subgraph on $(S^*, T^*)$ are peeled over time.

Notice that
\[|E_G(S^*, T^*)| \geq (8k_S\log_{1+\eps} n)|S^*| = (8k_T\log_{1+\eps} n)|T^*|\]
and so with the edges that are ignored, the number of remaining edges is lower bounded by
\begin{eqnarray*}
& & |E_G(S^*, T^*)| - (2k_S\log_{1+\eps} n)|S^*| - (2k_T\log_{1+\eps} n)|T^*|\\
& & \geq (4k_S\log_{1+\eps} n)|S^*| = (4k_T\log_{1+\eps} n)|T^*|.
\end{eqnarray*}
When peeling is performed on this subgraph, all the vertices that are peeled each cause at most $k_S$ edges to be removed, if it is from $S^*$, and at most $k_T$ edges to be removed, if it is from $T^*$.
If we assume that all the vertices are peeled, at most $2k_S|S^*| = 2k_T|T^*|$ edges will be removed, and the graph must be empty.
However, this is smaller than the lower bound on the number of edges above, so it is impossible to peel all the vertices.
Therefore, we end up with a non-empty subgraph, and the vertices in this subgraph will not be removed throughout the entire algorithm.
\end{proof}

\begin{restatable}{theorem}{theoremsemistreaming}
\label{theorem:semi-streaming}
There exists a single-pass semi-streaming algorithm that attains an $O(\log n)$-approximation of the directed densest subgraph.
\end{restatable}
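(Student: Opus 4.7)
The plan is to wrap \cref{alg:stream} in a master algorithm that handles the unknown inputs $D$ and $z$, argue the for-loop must trigger a return, and lower-bound the density of the returned subgraph.

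First, since \cref{alg:stream} assumes knowledge of $D$ and $z$, I would run it in parallel for $O(\log^2 n)$ guesses $(\hat{D}, \hat{z})$ taken as powers of $(1+\eps)$ in $[1, n^2]$ and $[n^{-1/2}, n^{1/2}]$, respectively. Each instance maintains only per-vertex integer level and degree counters, giving $\tilde{O}(n)$ memory per instance and $\tilde{O}(n)$ in total. The master algorithm outputs the pair $(S_i, T_i)$ of maximum density across all instances and all valid $i$. Fix a ``good'' guess with $\hat{D} \in [\rho(S^*, T^*)/(8(1+\eps)^{2}\log_{1+\eps} n), \rho(S^*, T^*)/(8(1+\eps)\log_{1+\eps} n)]$ and $\hat{z} \in [\sqrt{|S^*|/|T^*|}/(1+\eps), \sqrt{|S^*|/|T^*|}]$. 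By \cref{lemma:stream-nonempty}, $(S_i, T_i) \neq \emptyset$ for all $0 \leq i \leq 2\log_{1+\eps} n$.

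Second, I argue the for-loop triggers a return and establish the density lower bound. Whenever an iteration fails to trigger, at least one of $|S_i|, |T_i|$ shrinks by a $(1+\eps)$ factor; non-emptiness bounds each type of shrinkage by $\log_{1+\eps} n$ occurrences, so by pigeonhole (with a minor adjustment such as extending the loop to $2\log_{1+\eps} n + 1$ steps), at least one iteration $i^{\star}$ triggers. Consider condition~(S): $|S_{i^{\star}}| \geq \hat{z}^{2} |T_{i^{\star}}|$ and $|S_{i^{\star}}| \geq |S_{i^{\star}-1}|/(1+\eps)$. Directly from \cref{alg:stream}'s counter update rule, every $u \in S_{i^{\star}}$ accumulates $k_S$ counted edges during its stay at level $i^{\star}-1$, and each such edge $(u, v)$ satisfies $l_T(v) \geq i^{\star}-1$ at processing time; monotonicity of levels places $v \in T_{i^{\star}-1}$, yielding $|E_G(S_{i^{\star}}, T_{i^{\star}-1})| \geq k_S |S_{i^{\star}}|$ and, symmetrically, $|E_G(S_{i^{\star}-1}, T_{i^{\star}})| \geq k_T |T_{i^{\star}}|$. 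These one-layer-offset certificates transfer to $(S_{i^{\star}}, T_{i^{\star}})$ using that each $v \in T_{i^{\star}-1}\setminus T_{i^{\star}}$ has its $d_T$-counter stuck strictly below $k_T$ at its final level, combined with the return condition $|S_{i^{\star}}| \geq |S_{i^{\star}-1}|/(1+\eps)$ bounding the edge loss. The resulting $\rho(S_{i^{\star}}, T_{i^{\star}}) = \Omega(\hat{D})$, combined with $\hat{D} = \Omega(\rho(S^*, T^*)/\log n)$, yields the $O(\log n)$ approximation.

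The main obstacle is this final transfer step. The single-pass counters of \cref{alg:stream} certify edges between $S_i$ and $T_{i-1}$ (and symmetrically between $S_{i-1}$ and $T_i$) but not directly within $(S_i, T_i)$, and bridging this one-level offset requires a careful case analysis based on the relative sizes of $|T_{i^{\star}-1}|$ versus $|T_{i^{\star}}|$ (and its $S$-analog when condition~(T) triggers). Because the counter-erasure step in \cref{alg:stream} discards history, a naive loss bound is too weak; the argument must exploit the specific invariant that level-$(i^{\star}-1)$ counters on vertices in $T_{i^{\star}-1}\setminus T_{i^{\star}}$ never reached $k_T$, and use this to charge ``lost'' $(S_{i^{\star}}, T_{i^{\star}-1})$ edges back against the slack in the return condition.
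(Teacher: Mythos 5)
Your proposal follows essentially the same blueprint as the paper: run \cref{alg:stream} in parallel over $O(\log^2 n)$ guesses of $D,z$, invoke \cref{lemma:stream-nonempty} for a good guess, argue the for-loop must trigger, and lower-bound the density of the returned pair. Two minor departures are cosmetic: you select the maximum-density output across all instances and all $i$, whereas the paper selects the output with the largest $D$ for which the algorithm returns non-empty sets (both selections are sound once the density certificate is in place); and you note an off-by-one adjustment to the loop length for the pigeonhole, which is a legitimate nit.

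The substantive issue is your final transfer step, and you are right to flag it as the obstacle. The counter rule on \cref{alg:stream} uses $l_S(u)\leq l_T(v)$, so the $k_S$ edges counted for $u\in S_{i^\star}$ while it sat at level $i^\star-1$ land in $T_{i^\star-1}$, not $T_{i^\star}$ --- exactly the one-level offset you identify. (The paper's proof writes this condition as ``$l(u)<l(v)$,'' which would make the offset disappear but contradicts the algorithm as written, and then says only ``we follow a similar argument as the proof of \cref{alg:base}''; so the paper elides the same step you are grappling with.) However, your proposed repair does not close the gap as stated. The slack provided by the return condition under case~(S) is $|S_{i^\star-1}|-|S_{i^\star}|\leq\eps|S_{i^\star}|$, which controls $S$-side shrinkage, whereas the edges lost from the $S$-certificate $|E_G(S_{i^\star},T_{i^\star-1})|\geq k_S|S_{i^\star}|$ are those landing in $T_{i^\star-1}\setminus T_{i^\star}$, a set whose size condition~(S) does not constrain at all; the counter argument only gives $|E_G(S_{i^\star},T_{i^\star})|\geq k_S|S_{i^\star}|-k_T(|T_{i^\star-1}|-|T_{i^\star}|)$, which can be negative. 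If you instead pair the controlled slack with the symmetric $T$-certificate $|E_G(S_{i^\star-1},T_{i^\star})|\geq k_T|T_{i^\star}|$, you obtain $|E_G(S_{i^\star},T_{i^\star})|\geq k_T|T_{i^\star}|-\eps\,k_S|S_{i^\star}|$, which is useful only when $|S_{i^\star}|=O(z^2|T_{i^\star}|)$; but condition~(S) gives only the one-sided inequality $|S_{i^\star}|\geq z^2|T_{i^\star}|$, so this bound can also degrade arbitrarily. A correct argument must rule out the regime where $|S_{i^\star}|\gg z^2|T_{i^\star}|$ at the trigger step (e.g., by exploiting the history of which condition failed at earlier $i$, or by outputting a different pair such as $(S_{i^\star},T_{i^\star-1})$ and bounding its density), and neither your sketch nor the paper's one-sentence gesture supplies that piece.
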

\begin{proof}
Let $\Astream$ be our semi-streaming algorithm. We now describe it and provide its analysis.

\paragraph{Algorithm description.}
$\Astream$ runs \cref{alg:stream} on all $1 \leq D\leq n$ and $\frac{1}{\sqrt{n}} \leq z \leq \sqrt{n}$ in parallel using powers of $1+\eps$. Then, out of all the outputs of \cref{alg:stream} where the vertex sets are both non-empty, we pick the one corresponding to the largest $D$ as the final output of $\Astream$.

\paragraph{Algorithm memory.}
Note that \cref{alg:stream} does not need to store $(S_i, T_i)$ for all $0\leq i\leq 2\log_{1+\eps} n$ but only needs to store consecutive sets to compare their sizes.
Therefore, \cref{alg:stream} uses $O(n)$ memory through this implementation detail.
Additionally, we have $O(\log^2 n)$ total guesses on $D$ and $z$.
Each of these guesses is a copy of all the variables in \cref{alg:stream}, resulting in $O(n\log^2 n)$ memory in total.

\paragraph{Algorithm approximation.}
Let the directed densest subgraph have vertex sets $(S^*, T^*)$.
Note that because we only update $d(u)$ for an edge $(u, v)$ or $(v, u)$ with $l(u) < l(v)$, this ensures that a vertex that is not removed must have degree at least $D/(2z)$, if it is in $S$, and degree at least $Dz/2$, if it is in $T$.
Therefore, if \cref{alg:stream} outputs non-empty vertex sets for $D \geq \frac{\rho(S^*, T^*)}{8(1+\eps)^2\log_{1+\eps} n}$, we follow a similar argument as the proof of \cref{alg:base} to see that these vertex sets would be an $O(\log n)$-approximation of the directed densest subgraph.

Specifically, there exists a guess on $D$ and $z$ such that
\[\frac{\rho(S^*, T^*)}{8(1+\eps)^2\log_{1+\eps} n} \leq D \leq \frac{\rho(S^*, T^*)}{8(1+\eps)\log_{1+\eps} n}\text{ and } \frac{\sqrt{|S^*|/|T^*|}}{(1+\eps)} \leq z \leq \sqrt{|S^*|/|T^*|}.\]
Using \cref{lemma:stream-nonempty}, we know that all vertex sets $(S_i, T_i)$ in \cref{alg:stream} for this guess on $D$ and $z$ will be non-empty and so \cref{alg:stream} will output non-empty vertex sets.
Since $\Astream$ outputs the non-empty vertex sets corresponding to the largest $D$, they must satisfy $D \geq \frac{\rho(S^*, T^*)}{8(1+\eps)^2\log_{1+\eps} n}$ and results in an $O(\log n)$-approximation.
\end{proof}

\emph{Remark.} To adapt the algorithm to finding the \textit{undirected} densest subgraph, we no longer need $z$.
All other ideas remain the same, resulting in a  single-pass semi-streaming algorithm for approximating the undirected densest subgraph as well.
We also have a dynamic algorithm with $O(1)$ update time for each guess on $D$ and $z$.
We can maintain sets $(S_i, T_i)$ throughout the algorithm without increasing the update time, resulting in $O(\log n)$ update time and $O(\log^2 n)$ update time for directed graphs.
We can output our approximation of the densest subgraph anytime in $O(\poly \log n)$ time.
\section{Experiments}
\label{sec:experiments}
\textbf{Baselines.}
We empirically evaluate the performance of our semi-streaming algorithm, comparing it to the $(2+\eps)$-approximation $O(\log n)$ pass semi-streaming algorithm from \cite{bahmani2012densest}.
We also compare it to the single-pass semi-streaming algorithm from \cite{pmlr-v235-mitrovic24a}, but the density plots of \cite{pmlr-v235-mitrovic24a} follow very similarly to \cite{bahmani2012densest}.
So, we only include \cite{bahmani2012densest} in our approximation plots in \cref{fig:temporal}.
As discussed in the introduction, \cite{pmlr-v235-mitrovic24a} can have a quite high worst-case update time per edge on a stream, while the update time of our work is $O(\log^2 n)$.
We illustrate this empirically in \cref{fig:worst-case-update-time}.

To run these experiments, we use an M1 machine running macOS Sequoia 15.3.2 with 4 cores, 8 GB of RAM, 256 KB of L2 Cache, and 2.5 MB of L3 Cache (per core).

\textbf{Data.}
We use $10$ datasets from the Stanford Large Network Dataset Collection \cite{snapnets}. $5$ of these datasets (Slashdot, Berkeley-Stanford Web, Google Web, Pokec, LiveJournal) are general directed graphs; most consist of edges sorted by endpoints. 
The other $5$ datasets (Ask Ubuntu, Super User, Wikipedia, Twitter, Stack Overflow) are temporal directed graphs where edges are in sorted time order of when interactions happened.
These temporal graphs reflect a stream of edge updates.

\begin{table}[h!]
\centering
 \subfloat[General directed graphs]{
 \begin{tabular}{|c c c|} 
 \hline
 Graph & Nodes & Edges \\
 \hline
 Slashdot & 82,168 & 948,464 \\ 
 Google Web & 875,713 & 5,105,039 \\
 Berk-Stan Web & 685,230 & 7,600,595  \\
 Pokec & 1,632,803 & 30,622,564 \\
 LiveJournal & 4,847,571 & 68,993,773 \\
 \hline
 \end{tabular}
 }
 \quad
 \subfloat[Temporal directed graphs]{
 \begin{tabular}{|c c c|} 
 \hline
 Graph & Nodes & Edges \\
 \hline
 Ask Ubuntu & 159,316 & 964,437 \\ 
 Super User & 194,085 & 1,443,339  \\
 Wikipedia & 1,140,149 & 7,833,140 \\
 Twitter & 456,631 & 14,855,875 \\
 Stack Overflow & 2,601,977 & 63,497,050 \\
 \hline
 \end{tabular}
 }
\label{table:1}
\end{table}

\begin{figure*}[ht!]
\centering
\includegraphics[width=0.32\textwidth]{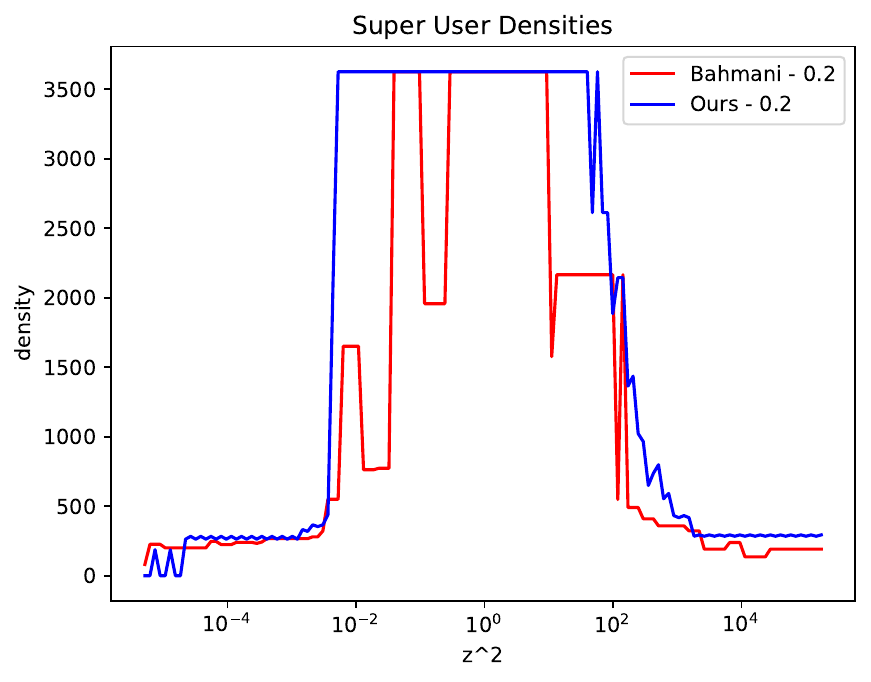}
\includegraphics[width=0.32\textwidth]{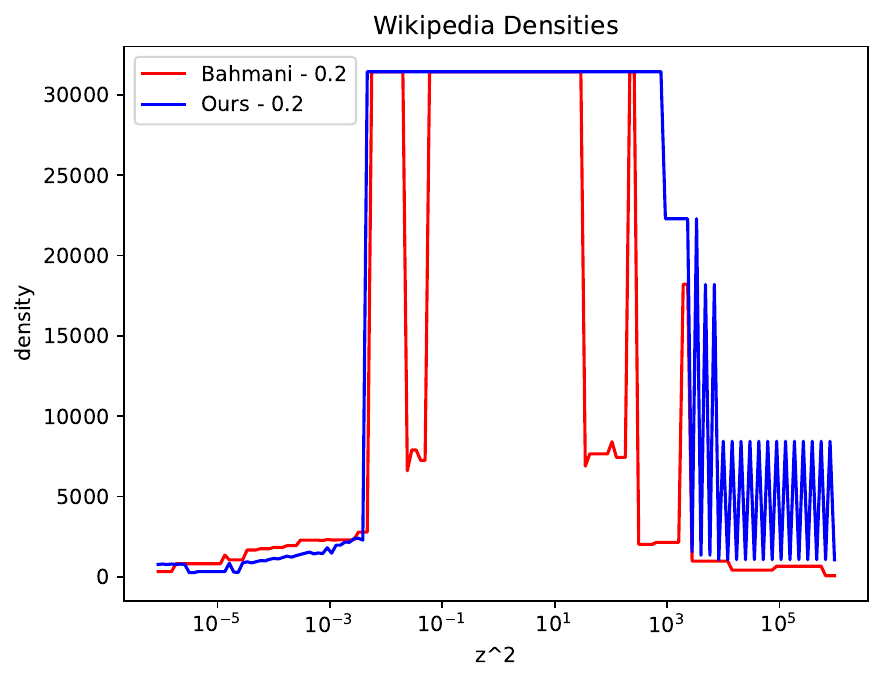}
\includegraphics[width=0.32\textwidth]{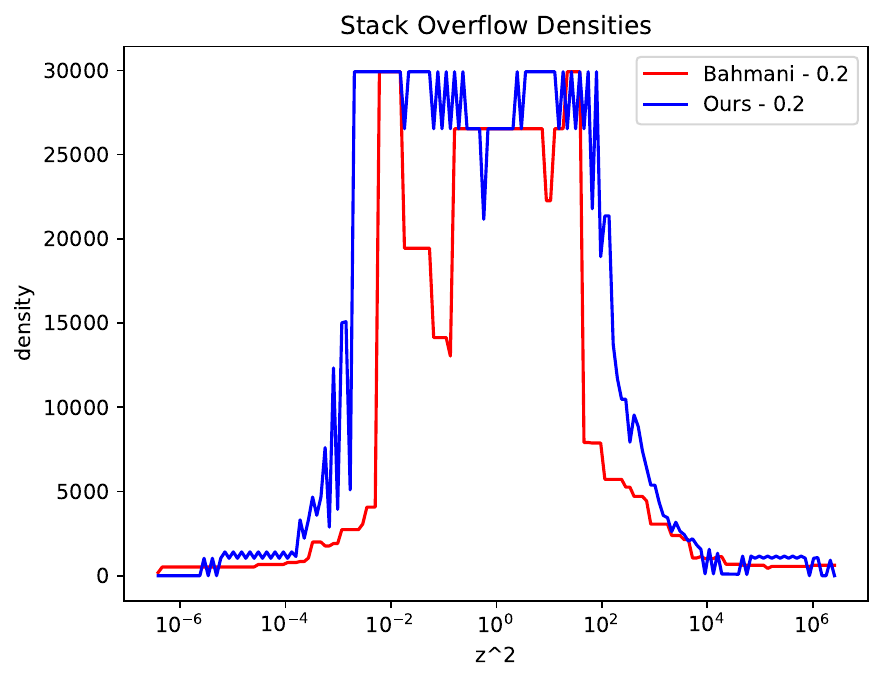}
\caption{Density as a function of $z^2$ for various temporal datasets.}
\label{fig:temporal}
\end{figure*}

\begin{figure*}[ht!]
\centering
\includegraphics[width=0.32\textwidth]{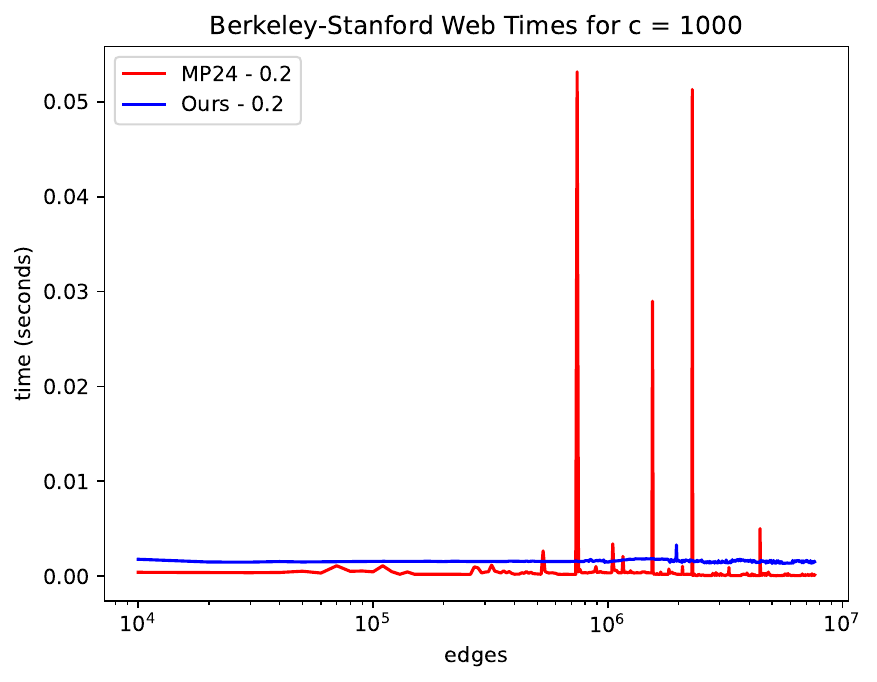}
\includegraphics[width=0.32\textwidth]{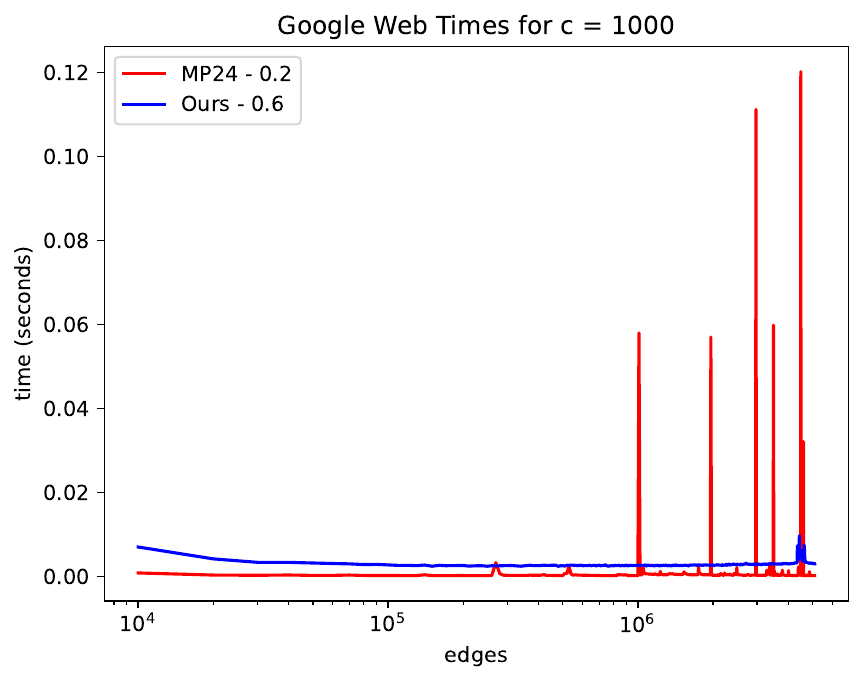}
\includegraphics[width=0.32\textwidth]{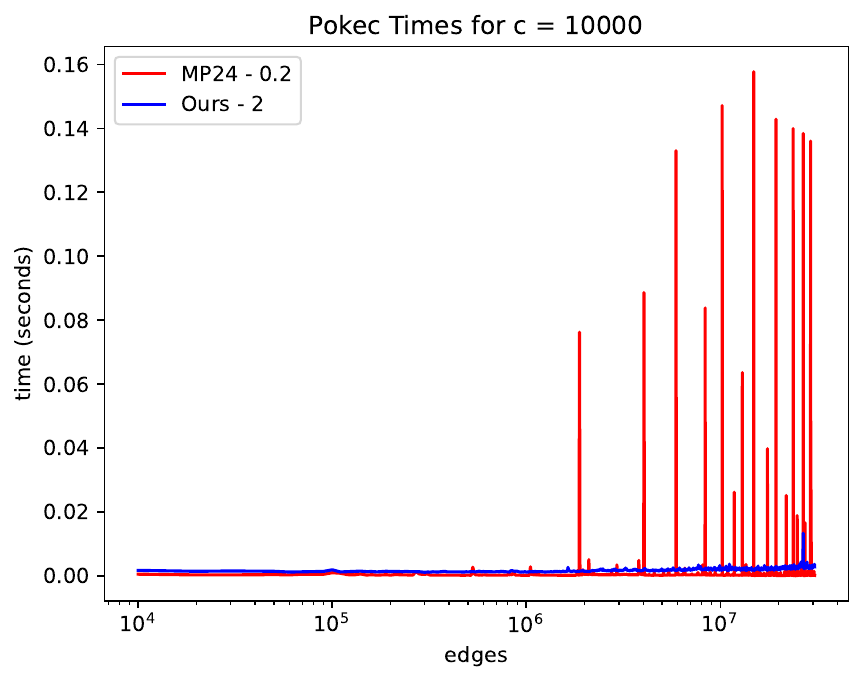}
\caption{Update time between processing edges of the stream for various general datasets.
}
\label{fig:worst-case-update-time}
\end{figure*}

\textbf{Results.}
We run the algorithms on temporal directed graphs with $\eps = 0.2$.
These datasets present events/edges in the order they occurred, making them an excellent benchmark for practical applications of dynamic algorithms.
In \cref{fig:temporal}, we see that our algorithm matches \cite{bahmani2012densest} for the largest density.
We also observe that it is significantly less sensitive to error in $z^2$, making it better in practice than \cite{bahmani2012densest} when our approximation of $z^2$ may not be as precise.
Overall, we demonstrate that in practice our algorithm performs much better than the $\log n$ theoretical guarantee would imply.

We also compare the update time of our algorithm to \cite{pmlr-v235-mitrovic24a}, using their threshold $f = 1/450$, in \cref{fig:worst-case-update-time}.
The plot contains update times of the algorithms for batches of 10,000 edges.
The chosen $c$ are where the largest densities are attained, and the chosen $\eps$ are based on maximizing densities. Note that $\eps$ does not significantly affect the running time of our algorithm.
As we can see, the worst-case update time of our algorithm is significantly more stable and lower compared to \cite{pmlr-v235-mitrovic24a}.
Experimental results for the remaining datasets in semi-streaming can be found in \cref{appendix:experiments-streaming}.

\begin{figure*}[ht!]
\centering
\includegraphics[width=0.32\textwidth]{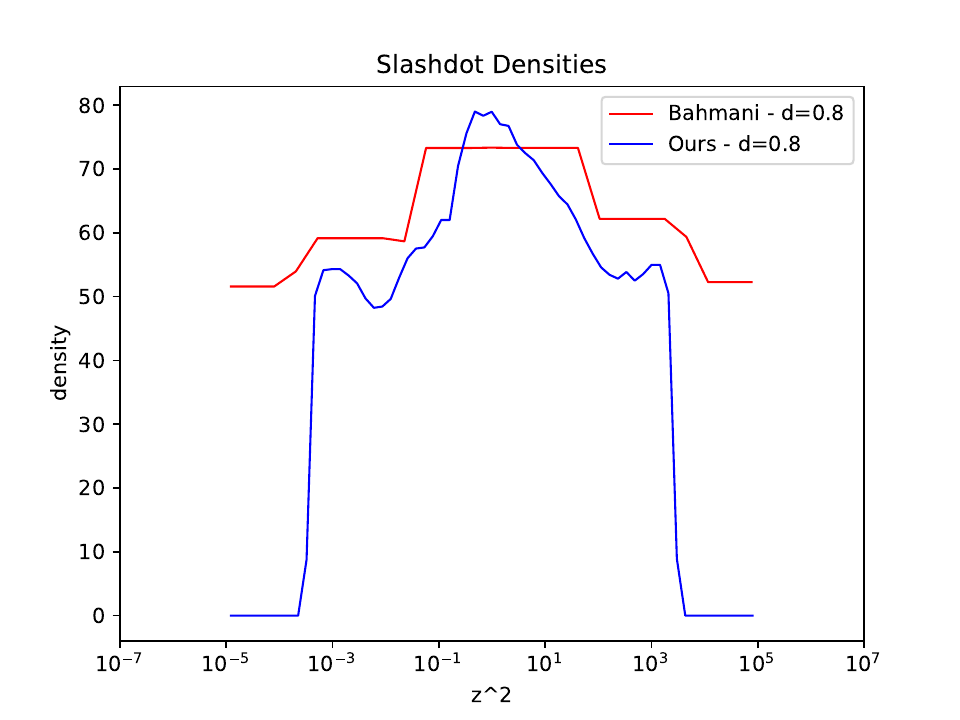}
\includegraphics[width=0.32\textwidth]{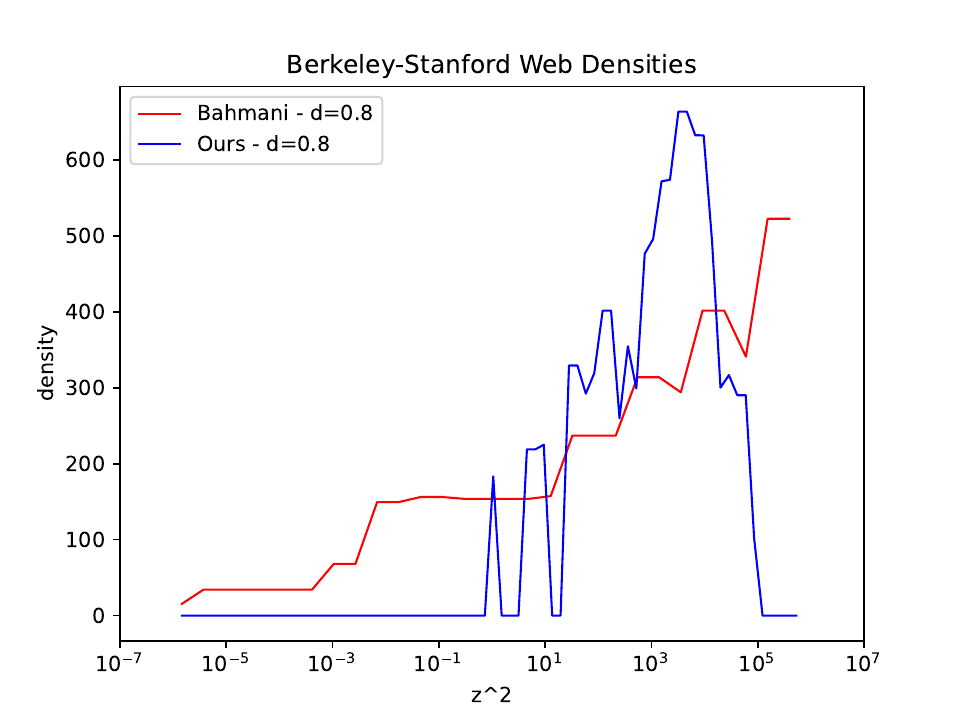}
\includegraphics[width=0.32\textwidth]{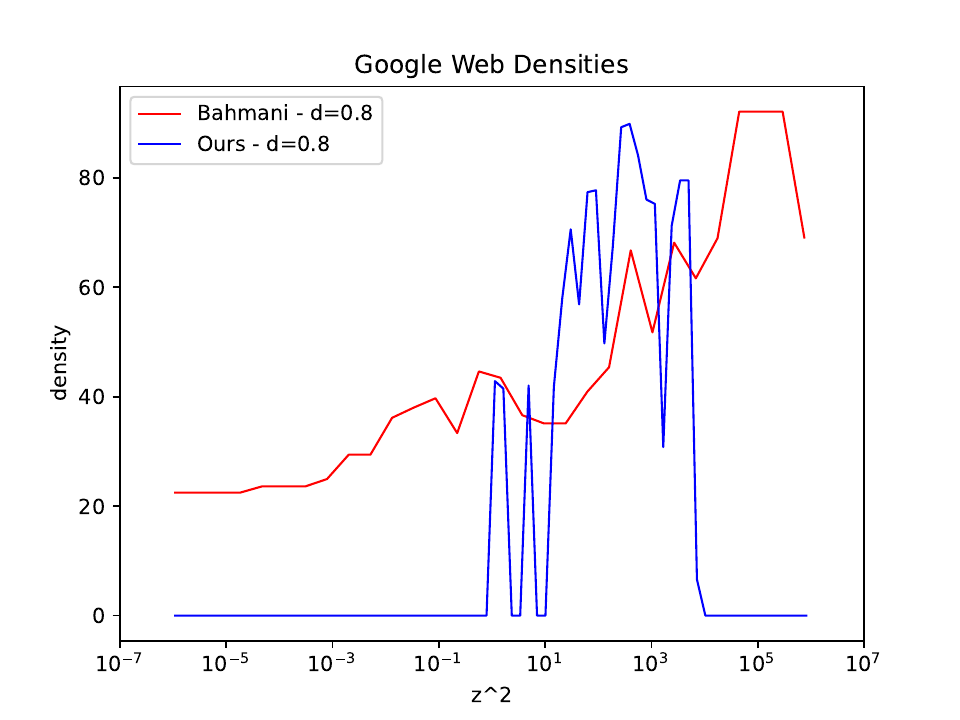}
\includegraphics[width=0.32\textwidth]{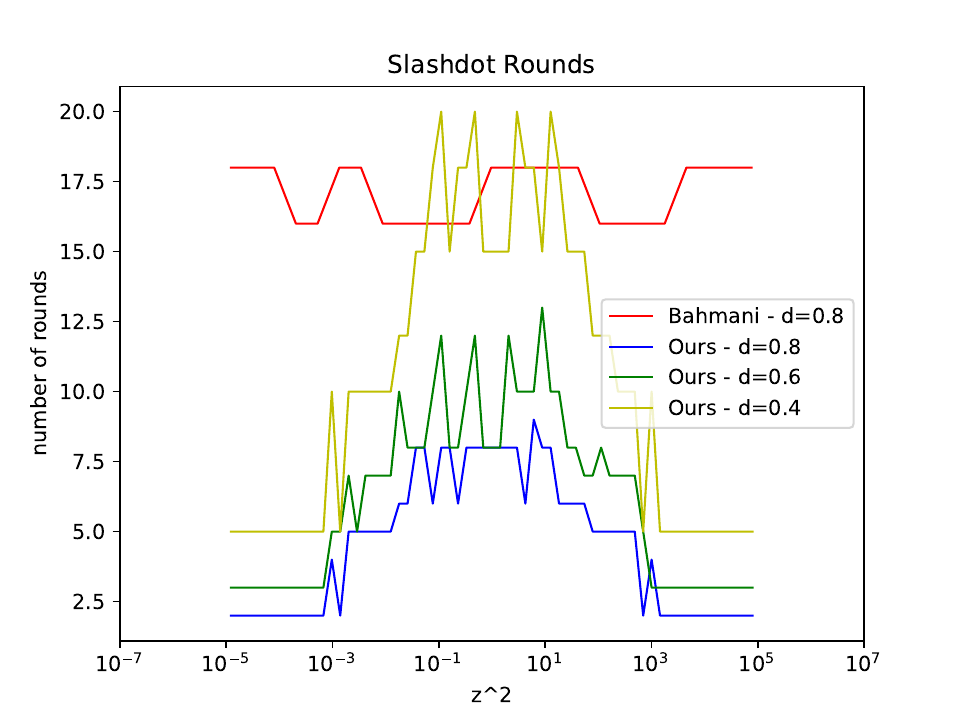}
\includegraphics[width=0.32\textwidth]{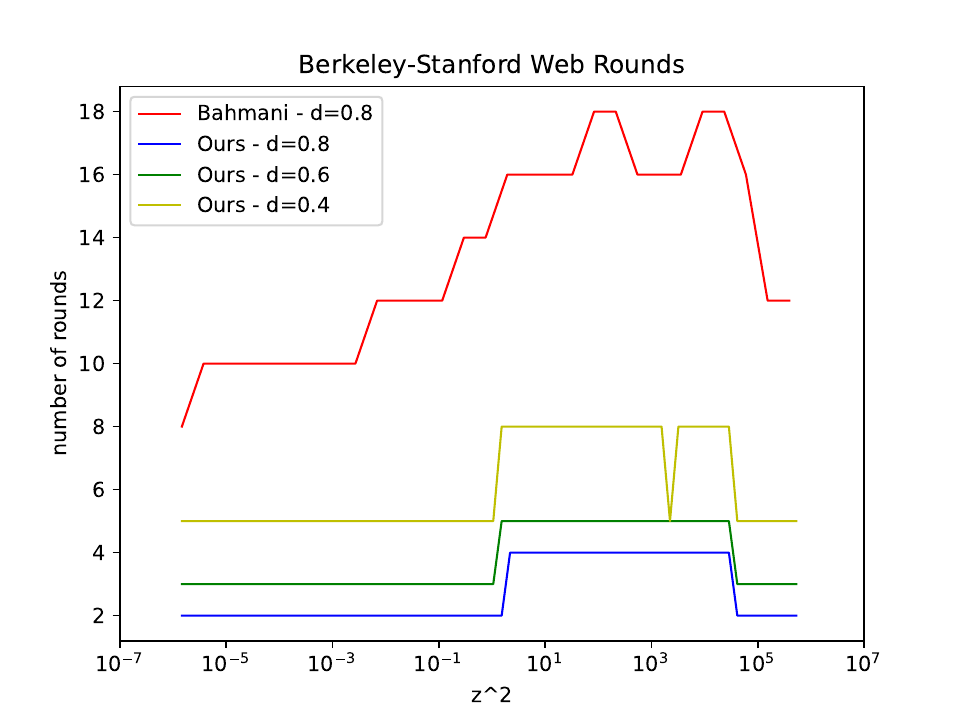}
\includegraphics[width=0.32\textwidth]{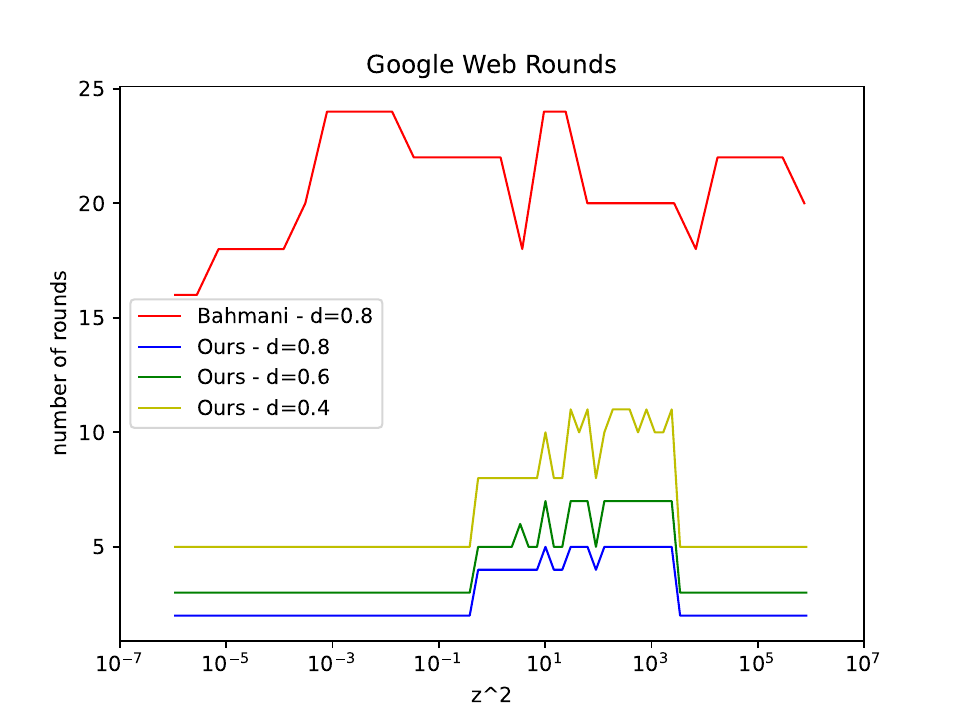}
\caption{Density and number of MPC rounds as a function of $z^2$ for $\delta = 0.4, 0.6, 0.8$, $\eps = 0.6$.}
\label{fig:mpc-results}
\end{figure*}

We also compare our MPC algorithm to the MPC algorithm for directed graphs from \cite{bahmani2012densest}.
Specifically, we look at their density plots as well as how many sublinear MPC rounds the algorithms take.
We use $\delta = 0.4, 0.6, 0.8$, where each machine has $n^\delta$ memory, and $\eps = 0.6$, the approximation parameter for both algorithms.
As we can see from \cref{fig:mpc-results}, our algorithm attains densities that match \cite{bahmani2012densest}.
Additionally, it does so using less than half the number of rounds when having the same amount of memory.
It continues to use significantly less rounds for smaller amounts of memory and only matches the number of rounds for the Slashdot dataset where our algorithm uses a quadratically smaller amount of memory per machine compared to what is used by \cite{bahmani2012densest}.
This is a significant improvement in the number of rounds in practice and reflects our improvement in the algorithm's theoretical upper bound.

\emph{\bf Remark.} These experiments were not executed on actual large-scale frameworks, but simulated in a centralized setting. 
The rounds in plots correspond to the number of rounds taken in those MPC simulations.
\section{Conclusion and future work}
We study the directed densest subgraph problem in MPC and semi-streaming models. 
Our MPC algorithm bridges the gap between known algorithms for computing undirected and directed approximate DS. 
We also develop a simple deterministic single-pass semi-streaming algorithm.
This is the first single-pass algorithm for the directed DS problem to achieve a sub-polynomial approximation.

Our work leaves a few intriguing questions.
First, even though our MPC algorithm is able to match the round complexity of the state-of-the-art undirected DS algorithm, there is still a gap between their approximation factors, $1+\eps$ compared to $2+\eps$.
Is it possible to improve the approximation guarantee for the directed DS problem while not increasing the round complexity?
Second, can this upper bound of $\tilde{O}(\sqrt{\log n})$ MPC rounds in the sublinear memory regime be broken for either undirected or directed graphs?
Third, our semi-streaming algorithm attains an $O(\log n)$-approximation.
Can we develop a single-pass semi-streaming algorithm with an $\Theta(1)$-approximation for directed graphs?

\begin{ack}
S.~Mitrovi\'c and T.~Pan were supported by NSF CAREER award, No.~2340048.
We are grateful to anonymous reviewers for their valuable feedback.
\end{ack}

\bibliographystyle{alpha}
\bibliography{ref}

\newcommand{\etalchar}[1]{$^{#1}$}
\begin{thebibliography}{BHNT15}

\bibitem[ASS{\etalchar{+}}18]{andoni2018parallel}
Alexandr Andoni, Zhao Song, Clifford Stein, Zhengyu Wang, and Peilin Zhong.
\newblock Parallel graph connectivity in log diameter rounds.
\newblock In {\em 2018 IEEE 59th Annual Symposium on Foundations of Computer Science (FOCS)}, pages 674--685. IEEE, 2018.

\bibitem[BGM14]{bahmani2014efficient}
Bahman Bahmani, Ashish Goel, and Kamesh Munagala.
\newblock Efficient primal-dual graph algorithms for mapreduce.
\newblock In {\em International Workshop on Algorithms and Models for the Web-Graph}, pages 59--78. Springer, 2014.

\bibitem[BHNT15]{bhattacharya2015space}
Sayan Bhattacharya, Monika Henzinger, Danupon Nanongkai, and Charalampos Tsourakakis.
\newblock Space-and time-efficient algorithm for maintaining dense subgraphs on one-pass dynamic streams.
\newblock In {\em Proceedings of the forty-seventh annual ACM symposium on Theory of computing}, pages 173--182, 2015.

\bibitem[BKV12]{bahmani2012densest}
Bahman Bahmani, Ravi Kumar, and Sergei Vassilvitskii.
\newblock Densest subgraph in streaming and mapreduce.
\newblock {\em Proceedings of the VLDB Endowment}, 5(5), 2012.

\bibitem[Cha00]{charikar2000greedy}
Moses Charikar.
\newblock Greedy approximation algorithms for finding dense components in a graph.
\newblock In {\em International workshop on approximation algorithms for combinatorial optimization}, pages 84--95. Springer, 2000.

\bibitem[CS10]{chen2010dense}
Jie Chen and Yousef Saad.
\newblock Dense subgraph extraction with application to community detection.
\newblock {\em IEEE Transactions on knowledge and data engineering}, 24(7):1216--1230, 2010.

\bibitem[CT22]{chen2022antibenford}
Tianyi Chen and Charalampos Tsourakakis.
\newblock Antibenford subgraphs: Unsupervised anomaly detection in financial networks.
\newblock In {\em Proceedings of the 28th ACM SIGKDD Conference on Knowledge Discovery and Data Mining}, pages 2762--2770, 2022.

\bibitem[EHW15]{esfandiari2015applications}
Hossein Esfandiari, MohammadTaghi Hajiaghayi, and David~P Woodruff.
\newblock Applications of uniform sampling: Densest subgraph and beyond.
\newblock {\em arXiv preprint arXiv:1506.04505}, 2015.

\bibitem[FNBB06]{fratkin2006motifcut}
Eugene Fratkin, Brian~T Naughton, Douglas~L Brutlag, and Serafim Batzoglou.
\newblock Motifcut: regulatory motifs finding with maximum density subgraphs.
\newblock {\em Bioinformatics}, 22(14):e150--e157, 2006.

\bibitem[For10]{fortunato2010community}
Santo Fortunato.
\newblock Community detection in graphs.
\newblock {\em Physics reports}, 486(3-5):75--174, 2010.

\bibitem[FRM19]{farago2019search}
Andr{\'a}s Farag{\'o} and Zohre R.~Mojaveri.
\newblock In search of the densest subgraph.
\newblock {\em Algorithms}, 12(8):157, 2019.

\bibitem[Gha17]{ghaffari2017distributed}
Mohsen Ghaffari.
\newblock Distributed mis via all-to-all communication.
\newblock In {\em Proceedings of the ACM symposium on principles of distributed computing}, pages 141--149, 2017.

\bibitem[GLM19]{ghaffari2019improved}
Mohsen Ghaffari, Silvio Lattanzi, and Slobodan Mitrovi{\'c}.
\newblock Improved parallel algorithms for density-based network clustering.
\newblock In {\em International Conference on Machine Learning}, pages 2201--2210. PMLR, 2019.

\bibitem[GSZ11]{goodrich2011sorting}
Michael~T Goodrich, Nodari Sitchinava, and Qin Zhang.
\newblock Sorting, searching, and simulation in the mapreduce framework.
\newblock In {\em International Symposium on Algorithms and Computation}, pages 374--383. Springer, 2011.

\bibitem[GU19]{ghaffari2019sparsifying}
Mohsen Ghaffari and Jara Uitto.
\newblock Sparsifying distributed algorithms with ramifications in massively parallel computation and centralized local computation.
\newblock In {\em Proceedings of the Thirtieth Annual ACM-SIAM Symposium on Discrete Algorithms}, pages 1636--1653. SIAM, 2019.

\bibitem[JZT{\etalchar{+}}22]{ji2022detecting}
Yingsheng Ji, Zheng Zhang, Xinlei Tang, Jiachen Shen, Xi~Zhang, and Guangwen Yang.
\newblock Detecting cash-out users via dense subgraphs.
\newblock In {\em Proceedings of the 28th ACM SIGKDD Conference on Knowledge Discovery and Data Mining}, pages 687--697, 2022.

\bibitem[LK14]{snapnets}
Jure Leskovec and Andrej Krevl.
\newblock {SNAP Datasets}: {Stanford} large network dataset collection.
\newblock \url{http://snap.stanford.edu/data}, June 2014.

\bibitem[LRJA10]{lee2010survey}
Victor~E Lee, Ning Ruan, Ruoming Jin, and Charu Aggarwal.
\newblock A survey of algorithms for dense subgraph discovery.
\newblock {\em Managing and mining graph data}, pages 303--336, 2010.

\bibitem[LW10]{lenzen2010brief}
Christoph Lenzen and Roger Wattenhofer.
\newblock Brief announcement: Exponential speed-up of local algorithms using non-local communication.
\newblock In {\em Proceedings of the 29th ACM SIGACT-SIGOPS symposium on Principles of distributed computing}, pages 295--296, 2010.

\bibitem[MP24]{pmlr-v235-mitrovic24a}
Slobodan Mitrovic and Theodore Pan.
\newblock Faster streaming and scalable algorithms for finding directed dense subgraphs in large graphs.
\newblock In {\em Proceedings of the 41st International Conference on Machine Learning}, volume 235 of {\em Proceedings of Machine Learning Research}, pages 35876--35891. PMLR, 2024.

\bibitem[SHK{\etalchar{+}}10]{saha2010dense}
Barna Saha, Allison Hoch, Samir Khuller, Louiqa Raschid, and Xiao-Ning Zhang.
\newblock Dense subgraphs with restrictions and applications to gene annotation graphs.
\newblock In {\em Annual International Conference on Research in Computational Molecular Biology}, pages 456--472. Springer, 2010.

\bibitem[SW20]{sawlani2020near}
Saurabh Sawlani and Junxing Wang.
\newblock Near-optimal fully dynamic densest subgraph.
\newblock In {\em Proceedings of the 52nd Annual ACM SIGACT Symposium on Theory of Computing}, pages 181--193, 2020.

\bibitem[ZP12]{zhang2012extracting}
Yang Zhang and Srinivasan Parthasarathy.
\newblock Extracting analyzing and visualizing triangle k-core motifs within networks.
\newblock In {\em 2012 IEEE 28th international conference on data engineering}, pages 1049--1060. IEEE, 2012.

\bibitem[ZT12]{zhao2012large}
Feng Zhao and Anthony~KH Tung.
\newblock Large scale cohesive subgraphs discovery for social network visual analysis.
\newblock {\em Proceedings of the VLDB Endowment}, 6(2):85--96, 2012.

\bibitem[ZZY{\etalchar{+}}17]{zhang2017hidden}
Si~Zhang, Dawei Zhou, Mehmet~Yigit Yildirim, Scott Alcorn, Jingrui He, Hasan Davulcu, and Hanghang Tong.
\newblock Hidden: hierarchical dense subgraph detection with application to financial fraud detection.
\newblock In {\em Proceedings of the 2017 SIAM international conference on data mining}, pages 570--578. SIAM, 2017.

\end{thebibliography}



\appendix
\crefalias{section}{appendix}
\section{Additional streaming experiments}
\label{appendix:experiments-streaming}
\begin{figure*}[ht!]
\centering
\includegraphics[width=0.32\textwidth]{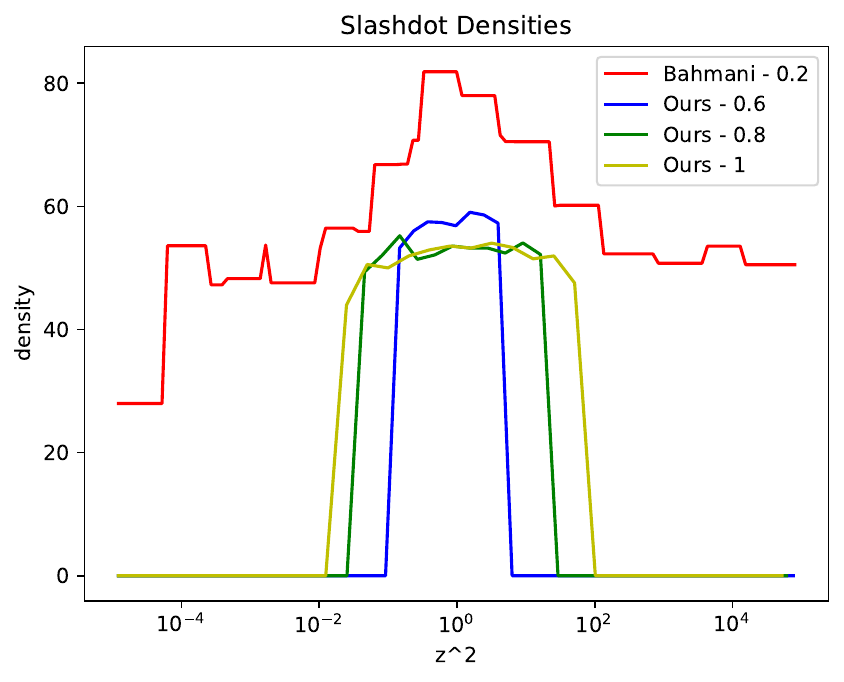}
\includegraphics[width=0.32\textwidth]{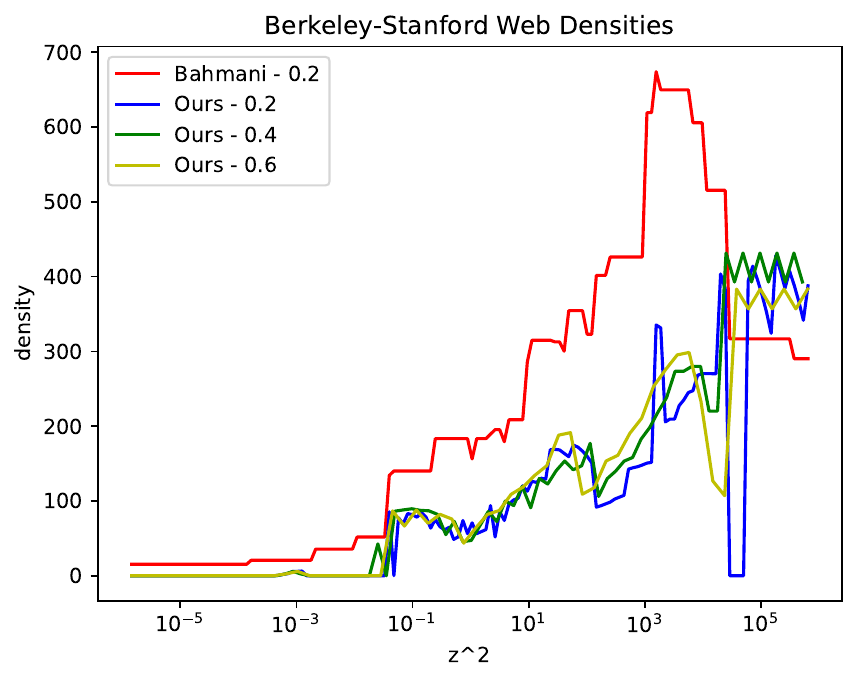}
\includegraphics[width=0.32\textwidth]{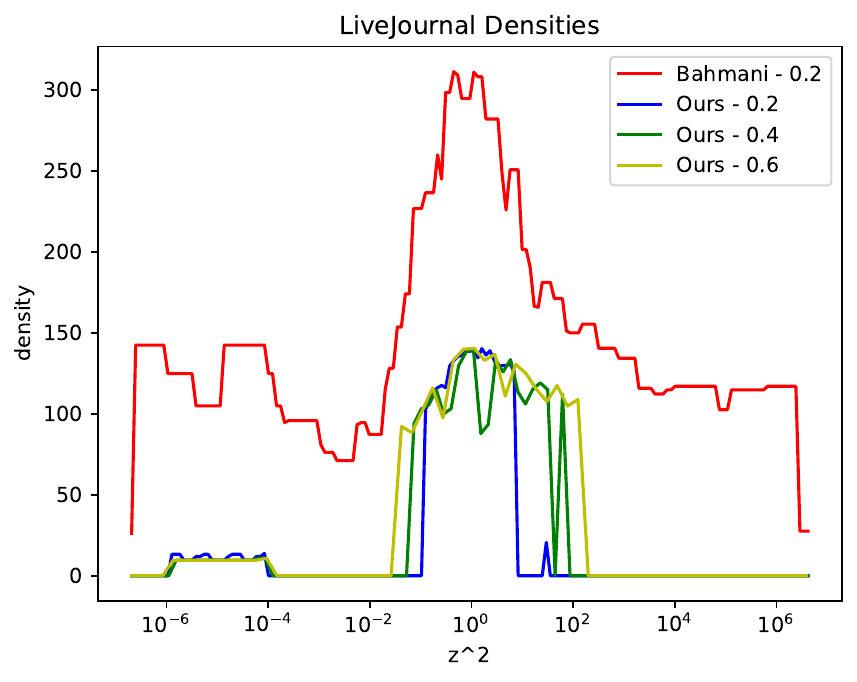}
\includegraphics[width=0.32\textwidth]{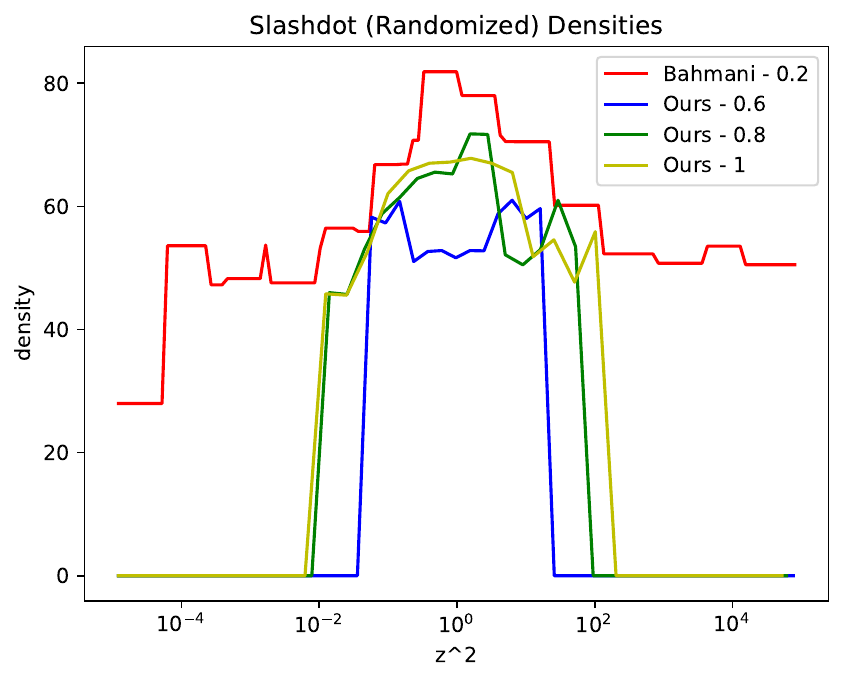}
\includegraphics[width=0.32\textwidth]{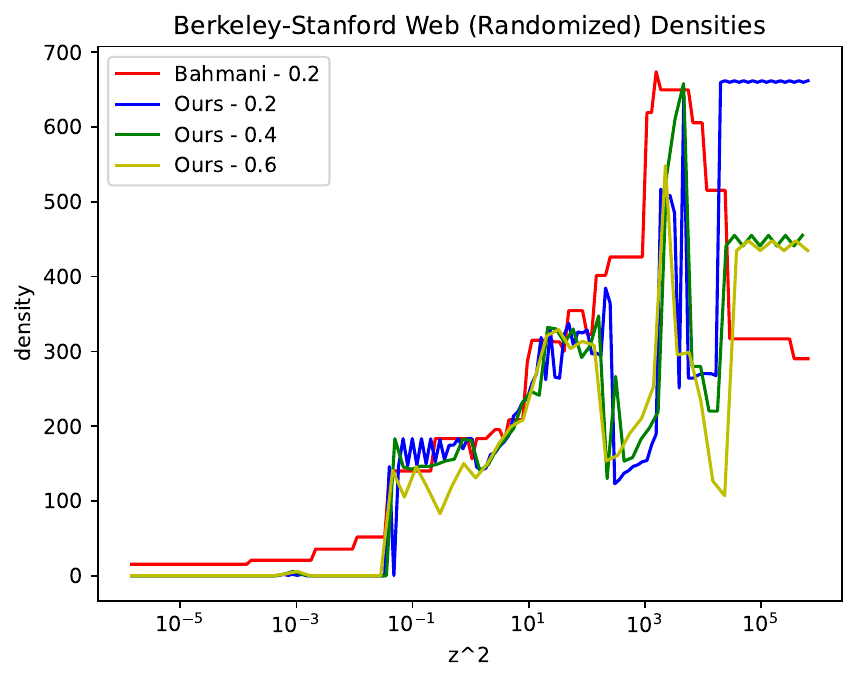}
\includegraphics[width=0.32\textwidth]{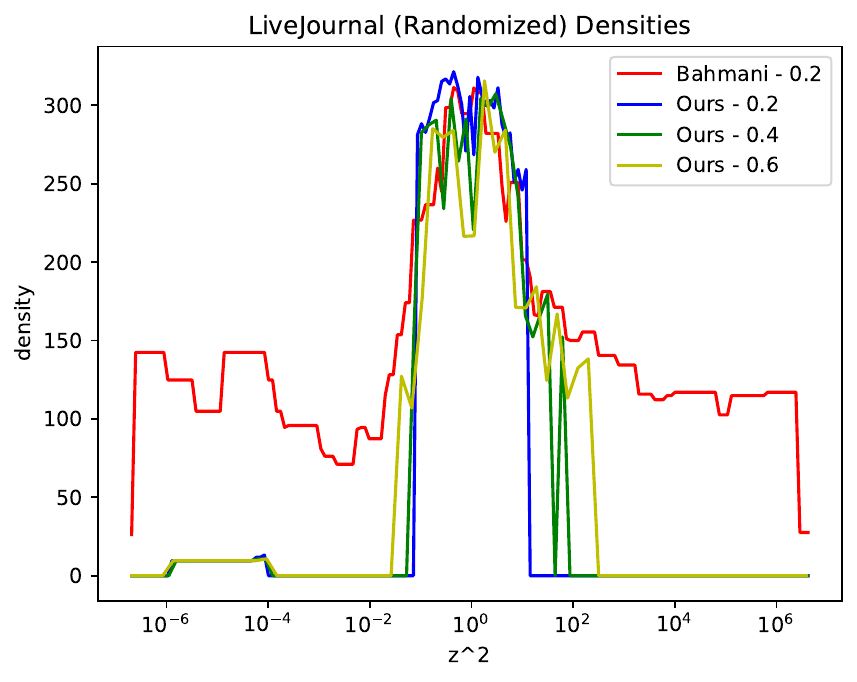}
\includegraphics[width=0.32\textwidth]{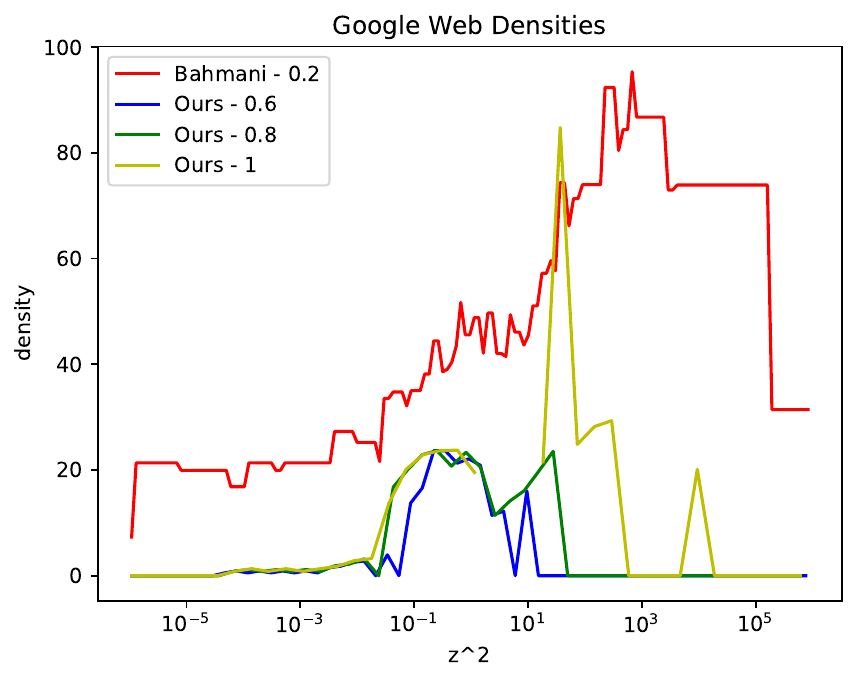}
\includegraphics[width=0.32\textwidth]{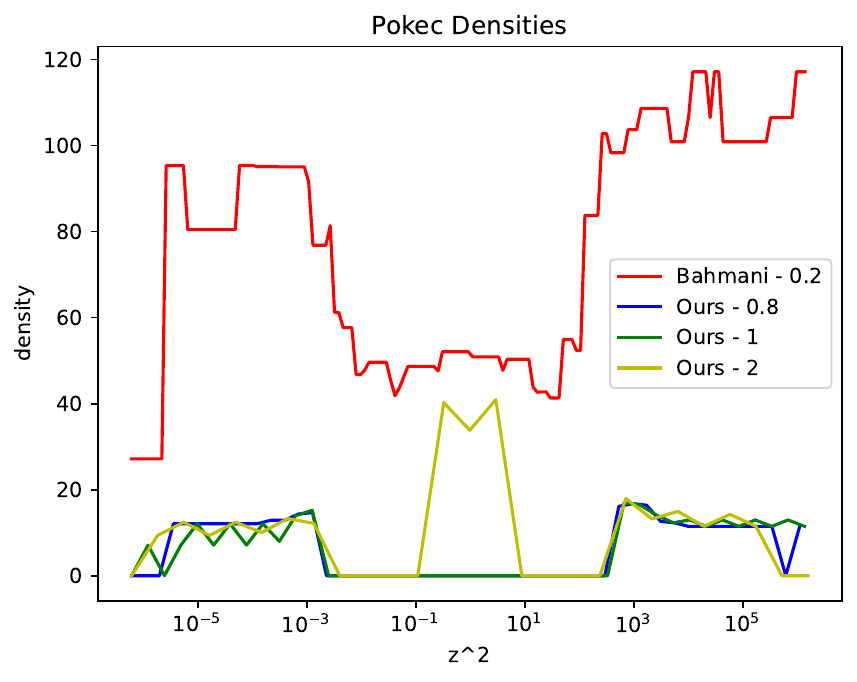}
\includegraphics[width=0.32\textwidth]{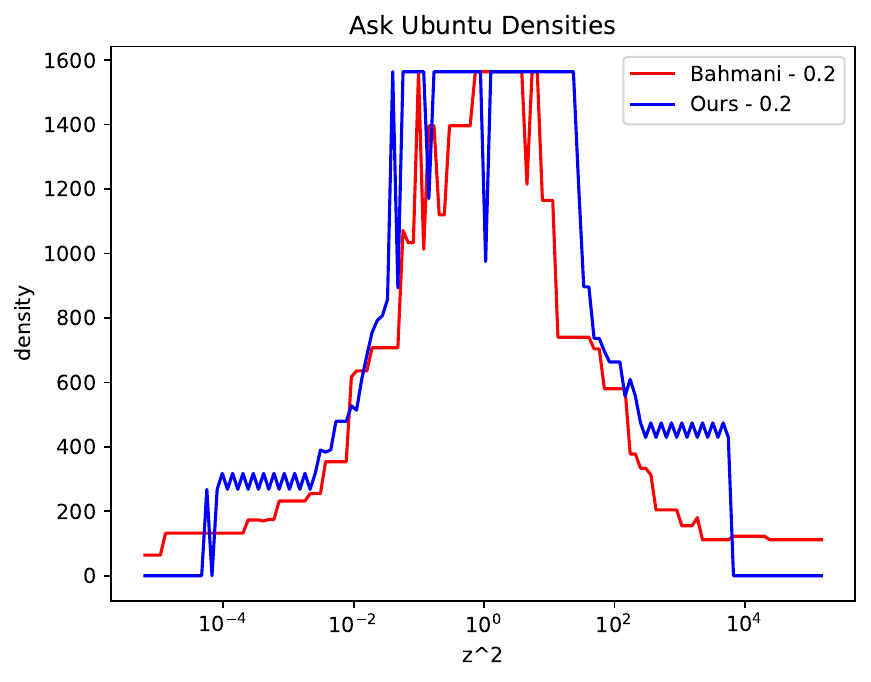}
\includegraphics[width=0.32\textwidth]{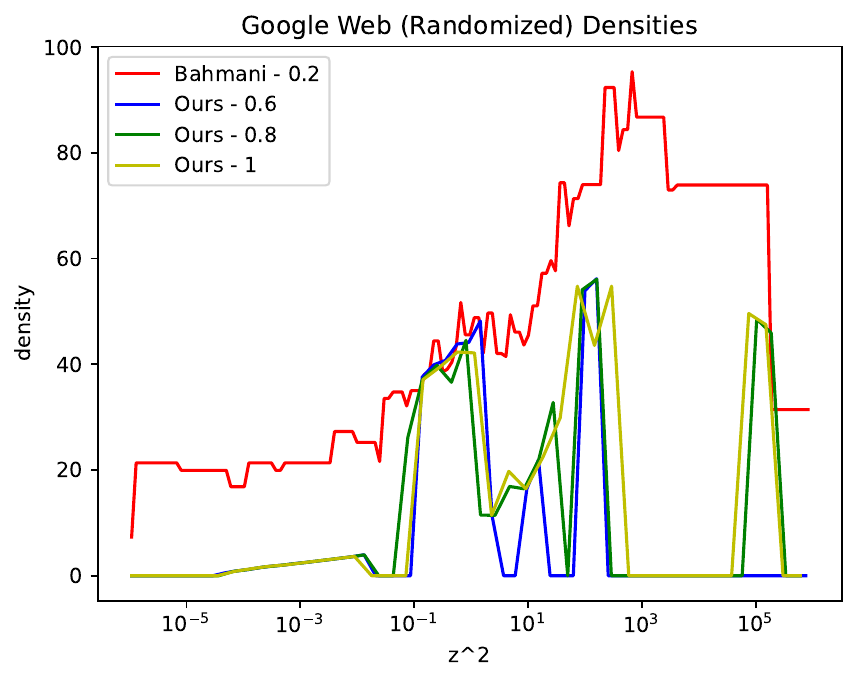}
\includegraphics[width=0.32\textwidth]{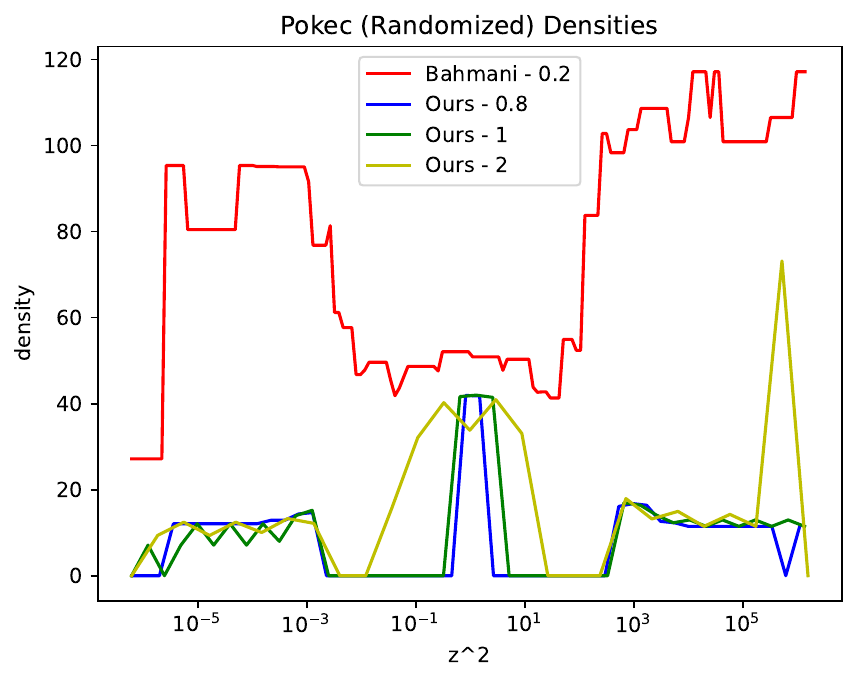}
\includegraphics[width=0.32\textwidth]{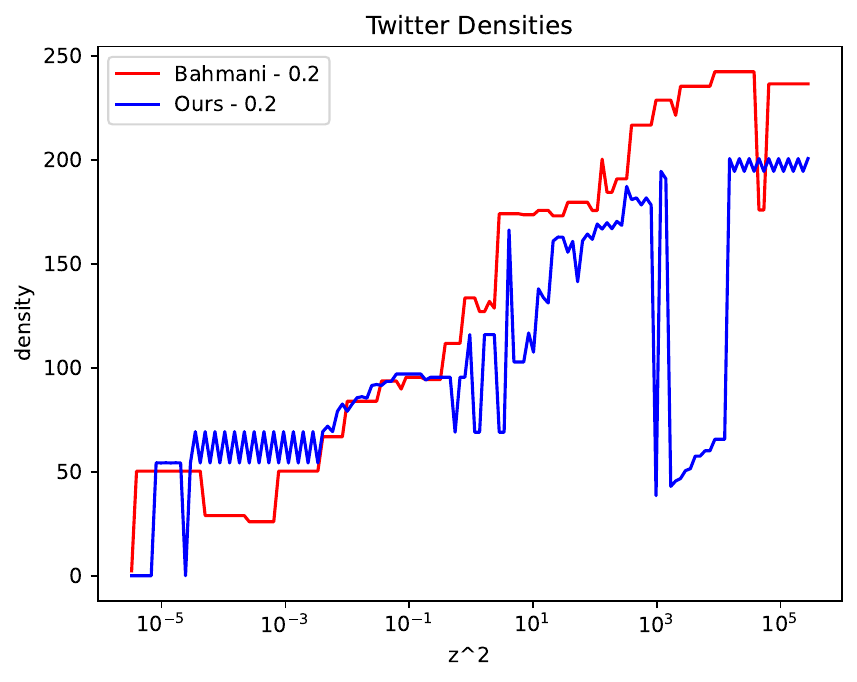}
\caption{Density as a function of $z^2$ for general datasets and remaining temporal datasets.}
\label{fig:general}
\end{figure*}

\begin{figure*}[ht!]
\centering
\includegraphics[width=0.32\textwidth]{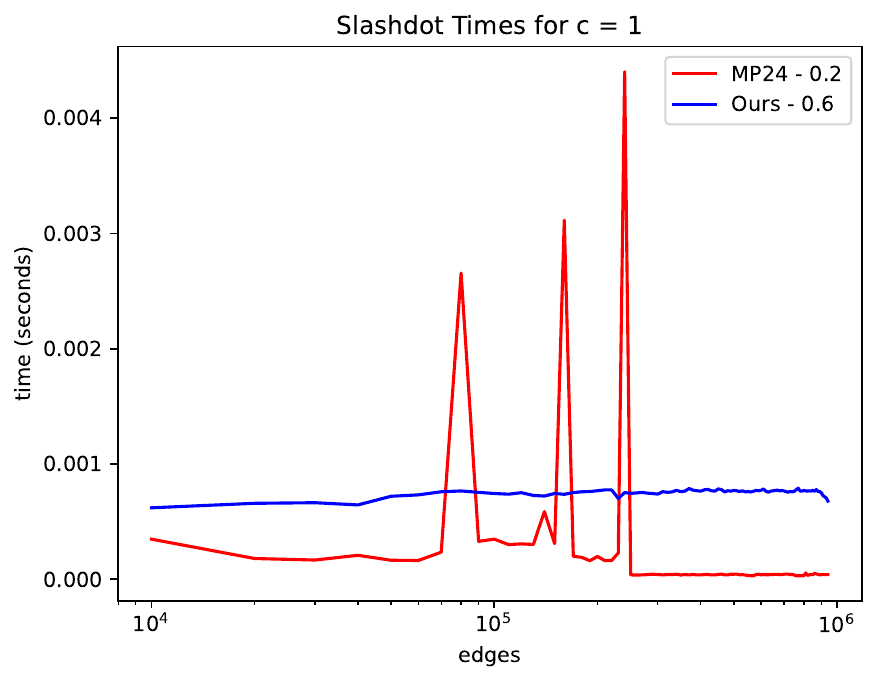}
\includegraphics[width=0.32\textwidth]{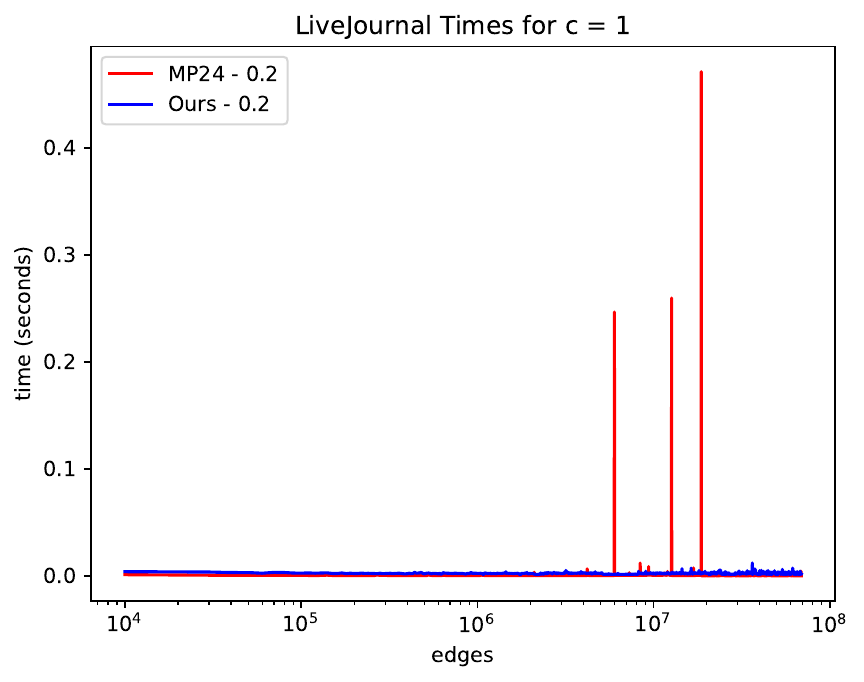}
\caption{Update time between processing edges of the stream for remaining general datasets.
}
\label{fig:time-remaining}
\end{figure*}

We run the algorithms on the general directed graphs.
We plot the results of our algorithm run on various values of $\eps$ ranging from $0.2$ to $2$ while we ran \cite{bahmani2012densest} on $\eps = 0.2$.
Even though most of these datasets have edges sorted by their endpoints, which can be considered adversarial for our algorithm, we see from \cref{fig:general} that their approximations are not far off from \cite{bahmani2012densest}.
Our maximum densities are within a factor of around $2$ from \cite{bahmani2012densest}, which is much closer than our theoretical guarantees.
We also consider a less adversarial order where we randomize the order of the edges.
With a randomized order, we also see in \cref{fig:general} that the computed densities increase significantly, almost matching those of \cite{bahmani2012densest}. \cref{fig:time-remaining} contains the update time plots for the remaining general datasets.

\end{document}